\newtheorem{theorem}{Theorem}{}
\newtheorem{corollary}{Corollary}{}
{}
{}
{}
\newtheorem{lemma}{Lemma}{}
\newtheorem{remark}{Remark}{}
{}
{}
{}
{}
{}
\newcommand{\code}{\mathcal{C}}
\newcommand{\prob}{\mathbb{P}}
\newcommand{\openone}{\leavevmode\hbox{\small1\normalsize\kern-.33em1}}
\newcommand {\bE} {\mathbb{E}}
\newcommand {\beq} {\begin{equation}}
\newcommand {\eeq} {\end{equation}}
\newcommand {\noise} {l^n}
\newcommand {\bad} {\mathcal{B}}
\newcommand {\ueta} {\widehat{\eta}}
\newcommand {\dlm} {\texttt{l}_{\text{max}}}
\newcommand {\dcode} {{\mathcal C}^{\perp}}
\newcommand{\dist}[2]{\text{dist} (#1, #2)}
\newcommand {\spins} {{\mathfrak V}}
\begin{document}
% paper title
\title{\sc Decay of Correlations for Sparse Graph Error Correcting Codes}

% author names and affiliations
% use a multiple column layout for up to three different
% affiliations
% avoiding spaces at the end of the author lines is not a problem with
% conference papers because we don't use \thanks or \IEEEmembership
% for over three affiliations, or if they all won't fit within the width
% of the page, use this alternative format:
%
%\author{\authorblockN{Shrinivas Kudekar, Nicolas Macris}
%\authorblockA{LTHC, EPFL, I\&C, CH-1015 Lausanne\\shrinivas.kudekar@epfl.ch, nicolas.macris@epfl.ch}}

\author{Shrinivas Kudekar, Nicolas Macris \\
{\small Communication Theory Laboratory}\\
{\small School of Computer and Communication Sciences}\\ 
{\small Ecole Polytechnique F\'ed\'erale Lausanne}\\
{\small CH-1015 Lausanne, Switzerland}\\
{\small shrinivas.kudekar@epfl.ch, nicolas.macris@epfl.ch}}

% make the title area
\maketitle

\begin{abstract} 
The subject of this paper is transmission over a general class of binary-input memoryless symmetric channels using error correcting codes based on sparse graphs, namely low-density generator-matrix and low-density parity-check codes.
The optimal (or ideal) decoder based on the posterior measure over the code bits, and its relationship to the sub-optimal belief propagation decoder, are investigated. We consider the correlation (or covariance) between two
codebits, averaged over the noise realizations, as a function of the graph distance, for the optimal decoder.  Our main result is that this correlation decays exponentially fast for fixed general low-density generator-matrix codes and high enough noise parameter, and also for fixed general
low-density parity-check codes and low enough noise parameter. This has many consequences. Appropriate performance curves - called GEXIT functions - of the belief propagation and optimal decoders match in high/low  noise regimes. This means that in high/low noise regimes the performance curves of the optimal decoder can be computed by density evolution. Another interpretation is that the  replica predictions of spin-glass theory are exact. Our methods are rather general and use cluster expansions first developed in the context of mathematical statistical mechanics.
\end{abstract}

\section{\sc Introduction}\label{section-1}

Low-density parity-check (LDPC) codes based on sparse graphs have emerged as a
focal point in the theory of error correcting codes, used in noisy channel
communication, largely because they are amenable to low complexity decoding and
at the same time have a good performance (measured as the gap to Shannon's
capacity). An important class of low complexity decoders are the message
passing iterative decoders. In this framework, in order to decode a bit
attached to a node of the graph, one unravels a computational tree (or covering
tree) and iteratively updates messages (suitable functions of the channel
output observations) passed along the edges of the computational tree. We refer
to the recent book \cite{Ruediger} for the state of the art of this general
theory.  One would also like to be able to compare sub-optimal message passing
decoders with the optimal or ideal decoder. The later is based on the posterior
probability distribution supported on code-bits and is optimal in the sense
that it is known to minimize the bit-error-rate among all decoders (it is also
called MAP decoder, and this is the terminology that we adopt in this paper). A
priori the comparison of decoders is not easily done since the MAP decoder is
in general computationaly complex.

One of the most important low complexity message passing decoders is the belief
propagation (BP) algorithm. It is well known that for a code whose graph is a
tree, the BP algorithm has the same performance as the MAP decoder. This
essentially comes from the fact that on a tree the computational graph of a
node matches the original graph itself.  However, codes based on tree graphs
have poor performance and one needs to consider graphs with loops or cycles.
With cycles in the original graph, the messages on the computational tree are
no longer independent, and it is not a priori clear, if and why, the BP
algorithm should retain any close relationship to the MAP decoder.  A
fundamental theoretical tool that allows to analyze the BP algorithm is density
evolution (DE) first developed in \cite{RU}. From DE one can for example
obtain a noise threshold above which reliable communication is not possible
with BP decoding.  The analysis proceeds by taking first a very large block
length $n$ and looking at $d\ll n$ iterations of the BP decoder. Eventualy one
considers the asymptotics $\lim_{d\to+\infty}\lim_{n\to+\infty}$. However, in
the practical  use of the decoder one fixes $n$ (large) and the iterations
$d\gg n$ are performed till one reaches an acceptably small bit-error-rate.
This corresponds to the asymptotics $\lim_{n\to+\infty}\limsup/{\rm
inf}_{d\to+\infty}$. The practical success of density evolution relies on the
equivalence of these two limiting procedures, an open problem in general. 

It is fair to say that these issues have been resolved over the binary erasure
channel (BEC) \cite{Ruediger} (the analysis however has not proceeded
from the point of view of the correlations of the MAP decoder).  An important
tool in the analysis over the BEC has been the "extended BP`` extrinsic
information transfer (EXIT) curve (which is a suitable continuation of the
bit-error-rate curve under BP decoding). Recently it was shown that the
bit-error-rate of the MAP decoder can be obtained from that of the extended BP
EXIT curve by a Maxwell construction just as in the theory of first order phase
transitions\footnote{The extended BP curve corresponds to the pressure-volume
curve of the Van der Waals theory of the liquid-gas transition, and the MAP
curve corresponds to the isotherms obtained by Maxwell's equal area
construction.} \cite{CyRuMon}, \cite{CyRuMonRi}. This construction allows to
compute the MAP noise threshold, and to compare it to the BP noise threshold.
The validity of the exchange of limits $d\to +\infty$ and $n\to +\infty$ (for
the BP decoder) can also be derived for the BEC using natural monotonicity
properties of the decoder \cite{Ruediger}.

For the case of transmission over more general channels very little is known
about these issues. Indeed there one lacks the combinatorial methods available
for the BEC, and radically new methods have to be used. Convenient measures of
the performance, which generalize the EXIT curves, are the so-called GEXIT
curves \cite{Ruediger} (see the next section for their precise definition). It is
believed that in terms of these, the results obtained for the BEC still hold.
In particular, the GEXIT curves for the BP and the MAP decoder should match for
high and low noise regimes away from the phase transition thresholds. Such
conjectures are supported by spin glass theory calculations (e.g the replica
and cavity methods) which provide conjectural but analytic formulas. One-sided
bounds have been derived for the GEXIT curves by the (information theoretical)
method of physical degradation \cite{Ruediger} and also by using correlation
inequalities valid for spin glasses \cite{Macris2}. Related bounds on the conditional
input-output entropy have also been derived \cite{Montanari}, \cite{KuM} by using
"interpolation methods" first developed in the mathematical theory of spin
glasses \cite{GuerraTon}, \cite{Franz}, \cite{Talagrand-book}.  As it turns out, all
these bounds match the replica expressions and are therefore believed to be the
best possible.  In \cite{KoKuMa07} the interpolation method has been extended
to obtain the converse bounds for a class of Poissonian LDPC codes over the
BEC, thus recovering combinatorial results of \cite{CyRuMon07isit}
in a completely different way. Concerning the problem of exchanging the $d,
n\to+\infty$ limits we refer to \cite{KoRu} for recent progress that goes beyond
the BEC.

In this work we will show that a good deal can be learned by looking at the
correlations (more precisely the covariance), averaged over the channel
outputs, of the MAP decoder. We comment below about the methods used, but let
us say at the outset that our aim is to cover a fairly general class of binary
input memoryless symmetric channels, including the binary symmetric and
gaussian ones.  One of our main results is that for sufficiently low noise
(LDPC codes) the correlations between two code-bits decay exponentially fast as
a function of the graph distance between the two code-bits, uniformly in the
block length size $n$.  The sparsity of the underlying graph then implies that,
if furthermore the decay rate beats the local expansion of the graph, the MAP
GEXIT curve can be computed by DE. Another interpretation of this result is
that the solutions provided by the replica/cavity methods of spin glass theory
are exact.

Low-density generator-matrix codes (LDGM) codes have a very clear relationship
to spin glass models on random graphs and it is useful to study them before we
can attack the harder case of LDPC codes. Besides, the present analysis could
potentially be useful in other contexts where they are used (e.g. rateless
codes, source coding).  For high noise we
prove the decay of correlations and that the MAP GEXIT curve can be computed by
DE. For that system, we can also show that the decay of correlations implies
that the limits $d,n\to+\infty$ can be exchanged for the BP decoder, at least
on the binary symmetric channel.

The study of the behavior of correlations as a function of the distance between
local degrees of freedom is one of the central aims of statistical physics.
For lattice spin systems (e.g. the Ising model) an important criterion that
ensures correlation decay is Dobrushin's criterion \cite{Georgii} - which is of
probabilistic nature - and its various improvements. The main
other method - which is not necessarily of probabilistic nature - is based on
suitable expansions in powers of ``the strength of interactions''. There exists a
host of such expansions collectively called ``cluster expansions'', and the
context of spin systems the first and simplest such expansion is the so-called
``polymer expansion'' \cite{Brydges}. The main rule of thumb is that all these methods
work if the degrees of freedom are weakly interacting, or if one can transform
the original system into an effective one involving new weakly interacting
degrees of freedom. It turns out that sophisticated forms of the cluster
expansions can be carried out, for LDGM codes in a high noise regime, and for
LDPC codes in a low noise regime, for a fairly general class of channels
including the binary symmetric channel (BSC) as well as the binary input
additive white gaussian noise channel (BIAWGNC). As we will explain later it is
necessary to use quite sophisticated cluster expansions for at least two
reasons. Concerning LDGM codes Dobrushin's criterion and the polymer expansion
require bounded channel outputs (and thus do not covers the case of the
BIAWGNC). Concerning LDPC codes one has to transform the system to a dual one
that involves "negative Gibbs weights" and cannot even be treated by
probabilistic methods.

The rest of the paper is organized as follows. In section \ref{section-2} we
formulate the models and give a unified view of the main results both for LDGM
and LDPC codes. The main strategy of the proofs is also explained there.
Sections \ref{section-3} and \ref{section-4} contain the proofs of correlation
decay and its consequences for the GEXIT curves. The problem of exchanging the
limits of iteration number and block length size is addressed in section
\ref{section-5}. We conclude by pointing out open problems and further
connections to the recent literature. The appendix reviews in a streamlined
form the two cluster expansions that are used in sections \ref{section-3} and
\ref{section-4}.

Summaries of the present results have been reported for the special case of the BIAWGNC \cite{KuMa08itw}, \cite{KuMa09isit}.

\section{\sc Models and Main Results}\label{section-2}

We consider binary-input memoryless output-symmetric channels defined by a transition p.d.f $p_{Y\mid X}(y\mid x)$ with inputs $x\in\{-1,+1\}$ and outputs belonging to $\mathbb{R}$. Since we use techniques from statistical mechanics it is convenient to immediately map the usual input alphabet $\{0,1\}$ to $\{-1, +1\}$. Symmetry of the channel means that $p_{Y\mid X}(-y\mid -x) = p_{Y\mid X}(y\mid x)$. The intensity of the noise is called $\epsilon$. 
It will be convenient to trade off the channel outputs $y$ for the half-loglikelihood
\begin{equation}
l = \frac{1}{2}\ln \biggl[\frac{p_{Y\vert X}(y\vert +1)}{p_{Y\vert X}(y\vert -1)}\biggr]
\end{equation} 
It is well known that on a symmetric channel one can assume without loss of generality that the all-one codeword (i.e the usual all-zero codeword) is transmitted and therefore the channel outputs are i.i.d with distribution $p_{Y\mid X}(y\vert +1)dy\equiv c(l) dl$.
For clarity, we assume that the noise parameter varies in an
interval $[0,\epsilon_{max}]$ ($\epsilon_{max}$ is possibly infinite) where
$\epsilon\to 0$ corresponds to low noise and $\epsilon\to \epsilon_{max}$
corresponds to high noise. For example, $\epsilon_{max}=\frac12$ for the BSC,
$\epsilon_{max}=+\infty$ for the BIAWGNC (and $\epsilon_{max}=1$ for the BEC). 
The general class of channels for which our main results hold is\\

\noindent{\bf Class of channels.} {\it We define the class $\mathcal{K}$ of binary-input memoryless output-symmetric channels:
\begin{enumerate}
\item 
The numbers $T_{2p}(\epsilon)= \frac{d}{d\epsilon}\int_{-\infty}^{\infty} dl\, c(l) (\tanh l)^{2p}$ are bounded uniformly with respect to $p\geq 1$ integer.
\item 
For any finite $m>0$ we have $\mathbb{E}[e^{m\vert l\vert}]\leq c_m<+\infty$
\item{(Low noise condition)}
There exists $s_0>0$ small enough such that for $0<s\leq s_0$ we have $\lim_{\epsilon\to 0}\mathbb{E}[e^{-sl}] = 0$.
\item{(High noise condition)} Set $\delta(\epsilon, H) = e^{4H}-1 + \mathbb{P}(|l| >
H)$. One can find $H(\epsilon)$ such that $\lim_{\epsilon \to \epsilon_{max}}
\delta(\epsilon, H(\epsilon))=0$.
\end{enumerate}
}

\vskip 0.25cm
Note that this class is not the most general that we can treat but it is at the same time fairly general and keeps the analysis at a technically reasonable level. An important example is the BSC (we keep $0\leq \epsilon\leq \frac{1}{2})$
\begin{align}\label{bsc}
 p_{Y\vert X}(y\vert x)& = (1-\epsilon)\delta(y-x) +\epsilon\delta(y + x), \nonumber \\
 c(l) & = (1-\epsilon)\delta(l- \frac{1}{2}\ln \frac{1-\epsilon}{\epsilon}) + \epsilon\delta((l- \frac{1}{2}\ln \frac{\epsilon}{1-\epsilon})
\end{align}
One can check that the conditions are met with $T_{2p}(\epsilon)= 2p(1-2\epsilon)^{2p-1}$, $\mathbb{E}[e^{m\vert l\vert}] = (\frac{1-\epsilon}{\epsilon})^{\frac{m}{2}}$, $\mathbb{E}[e^{-sl}] = \epsilon^{\frac{s}{2}}(1-\epsilon)^{1-\frac{s}{2}} + (1-\epsilon)^{\frac{s}{2}}\epsilon^{1-\frac{s}{2}}$ and $H(\epsilon) = \log \frac{1-\epsilon}{\epsilon}$. Another important example is the BIAWGNC
\begin{equation}\label{gaussian}
 p_{Y\vert X}(y\vert x) = \frac{1}{\sqrt{2\pi}\epsilon}
\exp\biggl(-\frac{(y-x)^2}{2\epsilon}\biggr),\,\, c(l)= {\frac{\epsilon}{\sqrt{2 \pi}}}\exp\biggl(-\frac{(l-\epsilon^{-1})^2}{2\epsilon^{-1}}\biggr)
\end{equation}
Again one can check that the conditions are met with $T_{2p}(\epsilon) \leq \int_{-\infty}^{+\infty} dl\, \vert\frac{dc(l)}{d\epsilon}\vert$, $\mathbb{E}[e^{m\vert l\vert}] < \infty$, $\mathbb{E}[e^{-sl}] = e^{-s\epsilon^{-1}(1-\frac{s}{2})}$ and $H(\epsilon)= 2 \epsilon^{-1/4}$.
%In fact, the case of unbounded likelihoods turns out to be significantly more difficult than that of bounded ones. For the later, such as on the BSC, we could use simpler polymer expansions but since we want to also cover the BIAWGNC we directly use the more powerful cluster expansions. 
Note that the BEC is not contained in the class $\mathcal{K}$ because of the second condition. Nevertheless due to the special nature of this channel our methods can easily be adapted, but we will not give the details here since this is a case that has already been thoroughly analyzed in the literature \cite{Ruediger}. 

Fixed LDGM codes are constructed from a fixed bipartite graph with $m$ information-bit nodes (variable nodes) and $n$ code-bit nodes (check nodes), and edges connecting variable and check nodes only. The design rate of the code $R=\frac{m}{n}$ is kept fixed. The set of neighbors of a variable node $a$ is called $\partial a$ and the set of neighbors of a check node $i$ is called 
$\partial i$. We consider graphs with bounded node degrees
$\vert \partial a\vert\leq l_{\max}$ and  
$\vert \partial i\vert\leq k_{\max}$. Information bits 
$u_1,...,u_{m}\in \{-1, +1\}^m$ are attached to the variable nodes and the code-bits $x_1,...,x_n$ attached to the check nodes are obtained as 
\begin{equation}
x_i=\prod_{a\in \partial i} u_a,\qquad i=1,...,n
\end{equation}
We also consider ensembles of such codes defined by random graph constructions. We do not explain the details of these constructions here except for saying that an LDGM($\Lambda, P$) ensemble is specified by the generating functions of variable (resp. check) node degree distributions $\Lambda(z)=\sum_{l=1}^{l_{\max}} \Lambda_l z^l$ (resp. $P(z)=\sum_{r=1}^{r_{\max}} P_r z^r$) \cite{Ruediger}.

Fixed LDPC codes are similarly constructed from a fixed bipartite graph with $n$ variable nodes $i=1,...,n$ (this time these are the code-bit nodes) and $m$ check nodes $c=1,...m$, with edges connecting variable and check nodes only. The design rate is $R=1-\frac{m}{n}$ is fixed. We assume that the node degrees are bounded $\vert \partial i \vert\leq l_{\max}$ and $\vert \partial c\vert\leq k_{\max}$. The code-bits $x_1,...,x_{n}$ attached to the variable nodes satisfy $m$ parity check constraints
\begin{equation} 
\prod_{i\in \partial c} x_i =1, \qquad c=1,...,m
\end{equation}
We also consider ensembles of such codes defined by random graph constructions; an ensemble is specified by the generating functions of variable (resp. check) node degree distribution $\Lambda(z)=\sum_{l=1}^{l_{\max}} \Lambda_l z^l$ (resp. $P(z)=\sum_{r=1}^{r_{\max}} P_r z^r$) \cite{Ruediger}.

The optimal MAP decoder is based on the posterior measure of the transmitted codeword given the received message $y^n=(y_1,..,y_n)$. 
For LDGM codes this conditional measure is best viewed as being supported on information bits,
\begin{equation}\label{gibbs-ldgm}
p_{U^m\vert Y^n}(u^m\vert y^n) = \frac{1}{Z} \prod_{i=1}^n e^{l_i\prod_{a\in \partial i} u_a}
\end{equation}
For LDPC codes the conditional measure is
\begin{equation}\label{gibbs-ldpc}
p_{X^n\vert Y^n}(x^n\vert y^n) = \frac{1}{Z} \prod_{c=1}^m \frac{1}{2}(1+\prod_{i\in\partial c} x_i)\prod_{i=1}^ne^{l_ix_i}
\end{equation}
In both cases $Z$ is the appropriate normalization factor.
These measures are random because of the channel outputs and possibly because the code is chosen at random from an ensemble. 
The average with respect to the channel outputs is often denoted by 
$\mathbb{E}_{l^n}$  and the average with respect to a code ensemble is generically denoted by $\mathbb{E}_\code$. We will also use the notation 
$\mathbb{E}_{l^{n\setminus i}}$ when the average is over all outputs except the $i$-th one.
A crucial point is that the interactions or constraints in these measures are 
local so that they can be analyzed with the tools developed in the theory of Gibbs measures \cite{Georgii}. 
We use the bracket notation $\langle f \rangle =$ 
\begin{equation}
\sum_{u^m} f(u^m) p_{U^m\vert Y^n}(u^m\vert y^n),\qquad \sum_{x^n} f(x^n) p_{X^n\vert Y^n}(x^n\vert y^n)
\end{equation}
for the Gibbs averages of functions $f$. It turns out that even for \eqref{gibbs-ldgm} we will only need to look at averages of functions of the transmitted codebits $x^n$; for example $\langle x_i\rangle =\langle \prod_{i\in a} u_a\rangle$. It is important to remember that the bracket is defined for finite $n$ although we do not write explicitly $\langle -\rangle_n$ to alleviate the notations. 
The average (over noise realizations) Gibbs entropy of the two measures is nothing else than Shannon's input-output conditional entropy $\frac{1}{n}H(U^m\vert Y^n)$, $\frac{1}{n}H(X^n\vert Y^n)$ denoted in both cases by $h_n$. The MAP-GEXIT function is simply defined as the $\epsilon$ derivative of this conditional entropy. When this derivative is performed one finds that the MAP-GEXIT function is a functional of the soft-bit MAP estimate\footnote{the magnetization}
$\langle x_i\rangle$ (or the magnetization). It is much more convenient, in fact, to express it as a functional of the extrinsic estimate $\langle x_i\rangle_0$ computed for $l_i=0$,
\begin{equation}\label{gexit}
 g_n(\epsilon)=\frac{d}{d\epsilon}\mathbb{E}_\code[h_n]
 =\mathbb{E}_\code[\mathcal{G}(\langle x_i\rangle_0)]
\end{equation}
The explicit form of the functionals corresponding to LDPC and LDGM codes is given in sections \ref{section-3} and \ref{section-4} (see also \cite{Ruediger}, \cite{Macris2}). 

Let us now describe the BP decoder from the point of view of Gibbs measures. Given a graph $G$ defining a given LDGM or LDPC code with for $n$ and $m$ fixed (large), we choose a code-bit node $i$ and construct the computational tree $T_{d}(i)$ of depth $d$ (even). This is the universal covering tree truncated at distance $d$ from node $i$. We label the variable/check nodes of this tree with new independent labels denoted $n$. Let  
$\pi : T_{d}(i)\to G$ be the projection from the covering tree to the original graph. A node $ \nu\in T_{d}(i)$ has an image $\pi( \nu)$, and due to the loops in $G$ this projection is a many to one map: one may have $\nu\neq \nu^\prime$, $\pi( \nu)=\pi( \nu^\prime)$.
Now, consider a tree-code defined in the usual way on the 
tree-graph $T_{d}(i)$. One can view the BP decoder for node $x_i$ as a MAP decoder for this tree-code. In other words the BP decoder uses the Gibbs measure on $T_d(i)$: one crucial point is that for this Gibbs measure the half-loglikelihood variables attached to the nodes are no longer independent. For the LDGM case the measure is
\begin{equation}\label{BP-ldgm}
\frac{1}{Z_{T_{d}(i)}} \prod_{ k\in T_{d}(i)}e^{l_\pi( k)\prod_{ a\in \partial  k} u_{ a}}
\end{equation}
while for LDPC case
\begin{equation}\label{BP-ldpc}
\frac{1}{Z_{T_{d}(i)}} \prod_{ c \in T_{d}(i)} \frac{1}{2}(1+\prod_{ k\in\partial  c} x_{k})\prod_{ k \in T_{d}(i)}e^{l_{\pi( k)}x_{ k}}
\end{equation}
where in each case $Z_{T_{d}(i)}$ is the proper normalization factor. 
We call $\langle -\rangle^{BP}_d$ the Gibbs bracket with respect to these measures. The extrinsic BP soft-bit estimate is  $\langle x_i \rangle^{BP}_{0,d}$.
The BP-GEXIT function can be defined\footnote{The definition adopted here is very natural from the point of view of the measures \eqref{BP-ldgm} and \eqref{BP-ldpc}. In \cite{Ruediger} another definition is given that is more natural from the point of view of information theory. It is not difficult to show that they are equivalent as $n\to +\infty$} in terms of the same functional than in \eqref{gexit}
\begin{equation}\label{functional-BP}
g_{n,d}^{BP}(\epsilon)=\mathbb{E}_\code[\mathcal{G}(\langle x_i\rangle^{BP}_{0,d})]
\end{equation}
The soft-bit estimate $\langle x_i \rangle^{BP}_d$ can be computed exactly by summing the spins starting from the leaves of $T_d(i)$ all the way up to the root $i$. This computation is left to the reader and yields the usual message passing BP algorithm.

We are now ready to describe our main results.
The main one concerns the exponential decay of the average correlation between two code-bits $x_i$ and $x_j$ as a function of their graph distance ${\rm dist}(i, j)$, uniformly in the system size $n$. 
\begin{theorem}[Decay of correlations for the MAP decoder]\label{theorem-corr}
Consider communication over channels $\mathcal{K}$. Take a fixed LDGM code at high enough noise $\epsilon_g<\epsilon<\epsilon_{max}$ or a fixed LDPC code at low enough noise $0<\epsilon<\epsilon_p$, where $\epsilon_g, \epsilon_p>0$ depend only on $\l_{\max}, k_{\max}$. Then 
\begin{align}\label{eq:avgcorr}
\bE_{l^n}\bigl[\vert\langle x_i x_j \rangle - \langle x_i \rangle \langle x_j \rangle \vert\bigr]\leq c_1e^{-\frac{{\rm dist}(i,j)}{\xi(\epsilon)}}
\end{align}
where $c_1$ is a finite positive numerical constant and $\xi(\epsilon)$ is a strictly positive constant depending 
only on $\epsilon$, $l_{\max}$ and $k_{\max}$. In both regimes we have that $\xi^{-1}(\epsilon)$ grows with $\epsilon\to 0$ and $\epsilon\to \epsilon_{\max}$.
\end{theorem}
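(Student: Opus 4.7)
The plan is to prove both halves of the theorem by a single mechanism: write $\log Z$ as a convergent cluster (polymer) expansion, and read off the truncated correlation $\langle x_ix_j\rangle-\langle x_i\rangle\langle x_j\rangle$ as the mixed second derivative of $\log Z$ with respect to two auxiliary fields $h_i,h_j$ coupled to $x_i,x_j$ (the channel outputs themselves act as such fields, so this is essentially a derivative with respect to two of the $l$'s). In a polymer representation of $\log Z$, only polymers whose support meets both sites $i$ and $j$ survive this mixed derivative; such polymers necessarily have support of diameter at least $\mathrm{dist}(i,j)$. Combined with a geometric bound on the number of polymers of a given size rooted at a vertex (this is where the degree bounds $l_{\max},k_{\max}$ enter) and an exponentially small expected activity per vertex (this is where the noise conditions enter), this gives exponential decay in $\mathrm{dist}(i,j)$, uniformly in $n$, after averaging over $l^n$.

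\textbf{LDGM at high noise.} Use the high-temperature identity $e^{l_i\sigma_i}=\cosh(l_i)\bigl(1+\sigma_i\tanh l_i\bigr)$ with $\sigma_i=\prod_{a\in\partial i}u_a$, expand the product over check nodes, and sum out the $u^m$. Only subsets $S$ of checks in which every variable appears an even number of times survive; these decompose into connected admissible components, our polymers. A polymer $\gamma$ has activity $\prod_{i\in\gamma}\tanh l_i$, bounded by $H^{|\gamma|}$ on the event $\{|l_i|\le H\ \forall i\in\gamma\}$ and by $1$ on its complement. Condition 4 (together with independence of the $l_i$'s) lets one choose $H=H(\epsilon)$ so that, after averaging over $l^n$, the Koteck\'y--Preiss convergence criterion holds uniformly in $n$ for $\epsilon$ close enough to $\epsilon_{\max}$. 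The standard polymer-expansion bound on connected correlations then yields~\eqref{eq:avgcorr}, with $\xi^{-1}(\epsilon)\to\infty$ as $\epsilon\to\epsilon_{\max}$ because the controlling activity tends to zero.

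\textbf{LDPC at low noise.} The measure is now in a low-temperature regime and a direct high-temperature expansion fails. The plan is a Fourier/Kramers--Wannier duality: writing each parity indicator as $\tfrac12(1+\prod_{i\in\partial c}x_i)$, expanding the product over $c$ in subsets $T\subseteq\{1,\dots,m\}$ of checks, and summing out $x^n$ leads to a dual representation of the form $Z=\prod_i 2\cosh l_i\cdot 2^{-m}\sum_T\prod_{i:\,|\partial i\cap T|\text{ odd}}\tanh l_i$. The natural small parameter at low noise is $1-|\tanh l|$, controlled by $\mathbb{E}[e^{-sl}]$ via condition 3. Resumming the $T$-sum around the ``ordered'' reference configuration produces a polymer expansion whose activities are small in expectation; however, these activities are signed, so the expansion must be controlled by the abstract cluster expansion reviewed in the appendix rather than by a probabilistic Dobrushin-type argument. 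Once a Koteck\'y--Preiss-type criterion is checked, the same mixed-derivative argument as above yields the exponential decay.

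\textbf{Expected main obstacle.} The routine part is the combinatorics: bounded degrees give, automatically, a geometric bound on the number of polymers of a given size containing a fixed site. The delicate part is the verification of the convergence criterion. On the LDGM side the main difficulty is that $l$ is unbounded on channels such as the BIAWGNC, so the naive bound $|\tanh l|\le 1$ is useless; the truncation at level $H(\epsilon)$ dictated by condition 4, together with the tail bound $\mathbb{P}(|l|>H)$ and the moment control provided by conditions 1--2, must be juggled so that the expected polymer activities stay summable. On the LDPC side the main obstacle is more serious: after duality the weights are genuinely signed, so no probabilistic (coupling/Dobrushin) argument is available, and one must identify the correct ``defect'' polymer objects so that each polymer of support size $L$ does carry an expected weight exponentially small in $L$. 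This is precisely the point where the two sophisticated cluster expansions announced in the introduction are needed, and will be the technical heart of Sections~\ref{section-3} and~\ref{section-4}.
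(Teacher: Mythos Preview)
Your high-level plan---cluster expansions in both regimes, with duality to reach a ``high-temperature'' picture on the LDPC side---matches the paper's. The specific expansions, however, differ in ways that matter, and on the LDGM side there is a genuine gap.

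\textbf{LDGM.} You propose the standard high-temperature polymer expansion (even-subgraph polymers with activity $\prod_{i\in\gamma}\tanh l_i$) together with a Koteck\'y--Preiss check. This is exactly the approach the paper explicitly discards for channels with unbounded $l$ such as the BIAWGNC. The KP criterion is deterministic and must be verified realization by realization; on realizations where several $l_i$ are large one has $|\tanh l_i|\approx 1$ and the criterion fails, while ``averaging first and then checking KP'' is not legitimate because $\log Z$ is nonlinear in the activities. Your truncation at level $H$ does not fix this: in the Ursell series for the truncated correlation the polymers overlap, so after taking absolute values and averaging, the expectation does \emph{not} factor over sites---the same $l_i$ appears to arbitrary powers. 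The paper's fix is to use the Dreifus--Klein--Perez expansion, whose key output (Lemma~\ref{lem:corrbound}) is a realization-wise bound of the correlation by a sum over \emph{self-avoiding} walks, $2\sum_{w\in W_{AB}}\prod_{i\in w}\rho_i$. Self-avoidance is the whole point: each check appears at most once, the $\rho_i$ along a walk are independent, and one only needs $\mathbb{E}[\rho_i]\le (e^{4H}-1)+\mathbb{P}(|l|>H)=\delta(\epsilon,H(\epsilon))$ small, which is precisely condition~4. This independence-via-self-avoidance is the missing idea in your LDGM argument.

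\textbf{LDPC.} Your duality is essentially the paper's Poisson-summation duality, and you correctly anticipate signed dual weights and the need for a non-probabilistic expansion. Two concrete steps you do not have: (i) the paper proves an \emph{exact correlation-level} duality, $\langle x_ix_j\rangle-\langle x_i\rangle\langle x_j\rangle=(\sinh 2l_i\sinh 2l_j)^{-1}\bigl(\langle\tau_i\tau_j\rangle_\perp-\langle\tau_i\rangle_\perp\langle\tau_j\rangle_\perp\bigr)$, and tames the $(\sinh)^{-1}$ singularity at $l=0$ via $|a|\le|a|^s$ for $0<s<\tfrac12$ combined with Cauchy--Schwarz and condition~3; (ii) the dual correlation is handled not by a KP-type polymer expansion but by Berretti's expansion over connected clusters $\hat X$, where the crucial estimate $|Z_\perp(\hat X^c)/Z_\perp|\le 1$ is obtained by going \emph{back} through duality to a genuine partition function with positive weights. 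Your ``resum around the ordered configuration and check a KP-type criterion'' is too vague to stand in for these two steps, and in particular does not address the $l=0$ singularity at all.
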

%\begin{example}
%For LDGM code and high noise the theorem is satisfied for the BIAWGNC with \\ $\epsilon^{-2}+\sqrt{2\epsilon^{-2}\ln 2K}<\frac{1}{4}\ln\big(1+\frac{1}{2K}\big)$; and for the
% BSC with $\vert\epsilon-\frac{1}{2}\vert<\frac{1}{2(2K+1)}$. For LDPC code and low noise the theorem is satisfied for the BIAWGNC with $4K e^{-\frac{1}{64\epsilon l_{\max}}}2^{k_{\max}} < 1$; 
%%In fact the condition $K\delta(\epsilon) <1$ can be met for a  more general class than $\mathcal{K}$. In particular it also covers the BEC for
%%an erasure probability $1-\frac{1}{2K}<\epsilon <1$.
%\end{example}
Let us say a few words on the strategy used to prove this theorem. As explained in the introduction, for LDGM at high noise and for channels with bounded loglikelihood variables, \eqref{eq:avgcorr} follows from Dobrushin's criterion
or from the polymer expansion. These however do not work when the likelihood variables are unbounded because, roughly speaking, overlapping polymers involve 
moments $\mathbb{E}[l^m]$ which can spoil the convergence as $m\to+\infty$. More physically, what happens is that even in the high noise regime there always exist with positive probability large portions of the graph that are at low noise (or "low temperature")\footnote{See \cite{Frohlich} for a nice discussion of this point related to the Griffith's singularity in the spin glass context}. We use a very convenient cluster expansion of Dreifus-Klein-Perez \cite{Klein} that overcomes this problem by organizing the expansion over self-avoiding random walks on the graph. Since the walks are self-avoiding the moment problem does not occur and we can treat unbounded loglikelihoods. For LDPC codes the situation is more subtle because of the hard parity-check constraints that give an inherently low temperature flavor to the problem.
From a purely code theoretical point of view it is known that LDPC codes are the dual of LDGM codes. This algebraic duality can be exploited to transform the low noise communication model with LDPC codes to a dual model
which, although not a genuine high noise communication model with LDGM codes, still retains this flavor. In fact this dual model involves "negative Gibbs weights". For this reason the cluster expansion of  \cite{Klein} does not work anymore and we use resort to another one first devised by Berretti \cite{Berretti}. The two cluster expansions have to be adapted to our setting and are therefore reviewed in a somewhat streamlined form in 
Appendix \ref{Appendix-A}.

\begin{remark}
The proof of theorem \ref{theorem-corr} will make it clear that for LDGM codes on channels with bounded loglikelihoods (e.g the BSC) at high noise, the average correlation $\mathbb{E}[\langle x_{ i} x_k \rangle_d^{BP} - 
\langle x_{ i} \rangle_d^{BP}\langle x_k \rangle_d^{BP}]$ between the root node and another one decays exponentialy fast (here $\mathbb{E}$ is over the noise. See \cite{Tatikonda} for related work based on Dobrushin's criterion. The likelihood variables over $T_d(i)$ are not independent anymore so that the unbounded case is even more complicated now and will not be discussed here. 
\end{remark}

Our first corollary says that the MAP-GEXIT function can be computed by the DE analysis in high/low noise regimes. It also shows that the replica expressions computed at the appropriate fixed point are exact.
\begin{corollary}[Density evolution allows to compute MAP]\label{corollary-1}
Consider communication over channels $\mathcal{K}$. For ensembles  LDGM($\Lambda, P$) with high enough noise $\epsilon_g^\prime<\epsilon<\epsilon_{max}$ and LDPC($\Lambda, P$) with low enough noise $0<\epsilon< \epsilon_p^\prime$  we have 
\begin{equation}\label{DE}
\lim_{n\to+\infty}g_n(\epsilon)=\lim_{d\to +\infty}\lim_{n\to +\infty}
g^{BP}_{n,d}(\epsilon)
\end{equation}
Here $\epsilon_g^\prime$ and $\epsilon_p^\prime$ depend only on $l_{\max}$, $k_{\max}$.
\end{corollary}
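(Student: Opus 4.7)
The plan is to reduce the claim to comparing the extrinsic soft-bit estimates $\langle x_i\rangle_0$ and $\langle x_i\rangle_{0,d}^{BP}$ and then exploit Theorem \ref{theorem-corr} through a local neighborhood decomposition. One checks from the explicit formulas derived in Sections \ref{section-3} and \ref{section-4} that the functional $\mathcal{G}$ is bounded and Lipschitz on $[-1,1]$, so it is enough to establish
\[
\lim_{d\to+\infty}\limsup_{n\to+\infty}\bE_{\code}\bE_{l^{n\setminus i}}\bigl[\vert \langle x_i\rangle_0-\langle x_i\rangle_{0,d}^{BP}\vert\bigr]=0.
\]

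The first ingredient is the standard locally tree-like property of the random ensembles LDGM($\Lambda,P$) and LDPC($\Lambda,P$): for any fixed $d$, the probability that the depth-$d$ neighborhood $B_d(i)\subset G$ is a tree and coincides, as a rooted labeled graph, with the truncated universal cover $T_d(i)$, tends to $1$ as $n\to+\infty$. Since magnetizations are bounded by one, the complementary event contributes only $o_n(1)$. On the high-probability event, the BP measures \eqref{BP-ldgm} or \eqref{BP-ldpc} are literally the restrictions of \eqref{gibbs-ldgm} or \eqref{gibbs-ldpc} to $B_d(i)$, so $\langle x_i\rangle_{0,d}^{BP}=\langle x_i\rangle_0^{B_d(i)}$ (with $l_i=0$ and all other loglikelihoods inherited from $G$).

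The second ingredient compares $\langle x_i\rangle_0^G$ with $\langle x_i\rangle_0^{B_d(i)}$ by an interpolation over the factors lying outside $B_d(i)$. For LDGM one introduces $\langle\cdot\rangle_t$ with Hamiltonian $H_{B_d(i)}+tH_{G\setminus B_d(i)}$, $t\in[0,1]$, and the fundamental theorem of calculus yields
\[
\langle x_i\rangle_0^G-\langle x_i\rangle_0^{B_d(i)}=\int_0^1 dt\sum_{f\notin B_d(i)}\bigl(\langle x_i h_f\rangle_t-\langle x_i\rangle_t\langle h_f\rangle_t\bigr),
\]
where $h_f=l_{i_f}\prod_{a\in\partial i_f}u_a=l_{i_f}x_{i_f}$ is the local energy of the LDGM check at node $i_f$; for LDPC an analogous formula holds after a convex softening of the hard parity-check factors, or equivalently by a cavity-style removal argument carried out in the dual model used to prove Theorem \ref{theorem-corr}. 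The support of each $h_f$ lies at factor-graph distance $r>d$ from $i$, so an extension of Theorem \ref{theorem-corr} to the interpolated measure, together with H\"older's inequality and the exponential moment bound of class $\mathcal{K}$ to absorb the unbounded prefactor $l_{i_f}$, provides a uniform-in-$t$ estimate
\[
\bE_{l^n}\bigl[\vert\langle x_i h_f\rangle_t-\langle x_i\rangle_t\langle h_f\rangle_t\vert\bigr]\le c_1' e^{-r/\xi(\epsilon)}.
\]
Since there are at most $C(l_{\max}k_{\max})^r$ factors at distance $r$ from $i$, summation gives
\[
\bE_{l^n}\bigl[\vert\langle x_i\rangle_0^G-\langle x_i\rangle_0^{B_d(i)}\vert\bigr]\le c_2\sum_{r>d}\bigl[e^{-1/\xi(\epsilon)}l_{\max}k_{\max}\bigr]^r,
\]
which vanishes as $d\to+\infty$ provided $\xi^{-1}(\epsilon)>\log(l_{\max}k_{\max})$. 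By the last sentence of Theorem \ref{theorem-corr}, $\xi^{-1}(\epsilon)$ grows without bound as $\epsilon\to 0$ (LDPC) and as $\epsilon\to\epsilon_{\max}$ (LDGM), so this strict inequality is met on strictly smaller intervals $(0,\epsilon_p')\subset(0,\epsilon_p)$ and $(\epsilon_g',\epsilon_{\max})\subset(\epsilon_g,\epsilon_{\max})$, which \emph{defines} $\epsilon_p'$ and $\epsilon_g'$. Taking limits in the order $n\to+\infty$ first and then $d\to+\infty$ yields \eqref{DE}.

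The main technical obstacle is the third step: one must verify that the cluster expansions of Appendix \ref{Appendix-A} go through \emph{uniformly} along the interpolation and give covariance bounds with the local-Hamiltonian observables $h_f$ rather than with the single bits treated directly in Theorem \ref{theorem-corr}. This should be a routine adaptation, since damping factor strengths in $[0,1]$ only decreases the expansion parameters, and the unbounded prefactor $l_{i_f}$ in the LDGM case is controlled by H\"older's inequality with the exponential moment bound of class $\mathcal{K}$, at the price of an arbitrarily small loss in the decay rate. A secondary point is the LDPC interpolation: because the parity-check factors are hard indicators, the derivative in $t$ must be taken on a smoothly regularized family, for example by writing the check factor as a $\beta\to+\infty$ limit of a soft constraint or by performing the cavity removal directly in the dual representation used to prove the LDPC case of Theorem \ref{theorem-corr}.
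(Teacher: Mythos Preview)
Your overall strategy---locally tree-like reduction followed by a comparison of $\langle x_i\rangle_0$ with its local-neighborhood analogue via correlation decay---is the same as the paper's. There are, however, two substantive points where your execution diverges from the paper's and where your writeup has genuine gaps.

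\medskip
\textbf{The Lipschitz claim on $\mathcal{G}$.} You assert that $\mathcal{G}$ is Lipschitz on $[-1,1]$, which would reduce everything to controlling $\bE\bigl[\vert\langle x_i\rangle_0-\langle x_i\rangle_{0,d}^{BP}\vert\bigr]$. But $\partial_m\ln(1+m\tanh l)=\tanh l/(1+m\tanh l)$ is \emph{not} bounded uniformly in $(m,l)\in[-1,1]\times\mathbb{R}$: as $m\to -1$ and $l\to+\infty$ it blows up like $e^{2|l|}$. A Lipschitz bound therefore requires $\int dl\,\bigl|\tfrac{dc(l)}{d\epsilon}\bigr|e^{2|l|}<\infty$, which is \emph{not} among the defining conditions of the class $\mathcal{K}$ (condition~1 only controls $T_{2p}$, and condition~2 concerns $c(l)$, not $dc/d\epsilon$). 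The paper avoids this issue entirely: it expands the logarithm in \eqref{eq:GEXIT1} and, using the Nishimori identities, obtains the absolutely convergent series \eqref{gexitEXP} in the moments $\bE[\langle x_i\rangle_0^{2p}]$; this step uses only the uniform boundedness of $T_{2p}$. One then proves the moment identity \eqref{right} term by term and resums. Your reduction therefore needs either an extra hypothesis on the channel or this series-expansion device.

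\medskip
\textbf{The interpolation.} Your continuous interpolation $H_{B_d(i)}+tH_{G\setminus B_d(i)}$ works, but it forces you to sum correlations with \emph{all} factors outside $B_d(i)$ and then to verify that the cluster expansions of Appendix~\ref{Appendix-A} hold uniformly in $t$ for the observable $h_f=l_{i_f}x_{i_f}$ (with the attendant H\"older step for the unbounded $l_{i_f}$). The paper's device is more economical: for LDGM it sets $l_k=0$ one by one only for the checks $k$ on the \emph{boundary} $\mathring{N}_d(i)$, using the exact identity $e^{l_k x_k}=\cosh l_k(1+x_k\tanh l_k)$ to get a finite telescoping sum of covariances $\langle x_ix_k\rangle_{0;\leq k}-\langle x_i\rangle_{0;\leq k}\langle x_k\rangle_{0;\leq k}$ at distance exactly $d$; the prefactor $t_k=\vert\tanh l_k\vert/(1-\vert\tanh l_k\vert)$ is controlled by $\bE[e^{2|l|}]$, which \emph{is} assumed in class $\mathcal{K}$. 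This yields directly the bound $K^d(K\delta(\epsilon))^d$ and avoids the uniform-in-$t$ cluster expansion altogether.

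\medskip
\textbf{LDPC.} Here you correctly flag the obstruction from hard constraints but do not resolve it. The paper's argument is genuinely different from the LDGM case: one cannot simply ``turn off'' parity checks, so the comparison is done via two lemmas comparing $\langle x_i\rangle_0$ first with the $+\infty$-boundary-condition local estimate $\langle x_i\rangle_{0,N_d(i)}^{\infty}$ (Lemma~\ref{first-lemma}) and then the latter with the free-boundary estimate (Lemma~\ref{second-lemma}). The interpolation is carried out by integrating $l_k^\prime$ from $l_k$ to $+\infty$ for boundary variable nodes, applying the duality relation \eqref{secondderivative} inside the integral, and then the Berretti expansion. Your sketch does not supply this machinery.
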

\noindent 
This result extends to the class of channels
$\mathcal{K}$ those obtained previously on the BEC \cite{Ruediger}, \cite{CyRuMon}.
In the case of LDPC ensembles with a vanishing GEXIT curve for $\epsilon\leq \epsilon_*$ it is known that the result can be more easily obtained by physical degradation \cite{CyRuMonRi} or correlation inequalities \cite{Macris2}\cite{Macris} for $\epsilon\leq \epsilon_*$. However there are ensembles with a GEXIT curve that is non trivial all the way down to $\epsilon\to 0$ (for example the Poisson LDPC ensemble) and for which the theorem is new. Note that it applies whether there is or not a phase transition (e.g a jump discontinuity in the GEXIT curve): so it applies even in situations where the area theorem does not allow to prove \eqref{DE}.
The values obtained for $\epsilon_{p,g}^\prime$ are worse than those $\epsilon_{p,g}$ obtained in theorem \ref{theorem-corr}. This is not surprising in view of the following remarks.
It is expected (and for the BEC in some cases it is known) that the equality 
\eqref{DE} is true as long as the noise parameter does not lie in a window 
around the phase transition threshold where this window is determined by an extended form of the BP-GEXIT curve (an S shaped curve). On the other hand inside the window, close to the phase transition threshold, it is known that \eqref{DE} cannot hold. 
A look at the proof shows that the decay of correlations always implies \eqref{DE} only if this decay is fast enough to beat the expansion of the graph: in other words if $\xi\ln (l_{\max} k_{\max}) \ll 1$. Our estimates allow to control the growth of $\xi^{-1}$ with respect to $\epsilon$ to show that such a regime exists.  
Therefore in a window close to the phase transition threshold, even if the correlations decay, $\xi\ln (l_{\max} k_{\max}) \ll 1$ cannot be valid.

Finally concerning the exchange of limits $d,n\to +\infty$ for the BP algorithm we prove
\begin{theorem}[Exchange of limits]\label{theorem-2}
Consider communication over the BSC. For LDGM($\Lambda, P$) ensembles with bounded degrees with high enough noise $\epsilon_g^{\prime\prime}<\epsilon<\epsilon_{max}$, depending only on $l_{\max}, k_{\max}$, we have 
\begin{equation}\label{limit-exchange}
\lim_{d\to +\infty}\lim_{n\to +\infty} g_{n,d}^{BP}(\epsilon)= \lim_{n\to +\infty}\limsup_{d\to +\infty}
g^{BP}_{n,d}(\epsilon) = \lim_{n\to +\infty}\liminf_{d\to +\infty}
g^{BP}_{n,d}(\epsilon)
\end{equation}
\end{theorem}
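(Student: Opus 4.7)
The plan is to establish uniform convergence of $g_{n,d}^{BP}(\epsilon)$ in $n$ as $d\to +\infty$; once this is in hand, the three quantities in \eqref{limit-exchange} automatically coincide. Indeed, if $g_{n,d}^{BP}\to g_n^{BP,\infty}$ uniformly in $n$, then for each $n$ the sequence converges so $\limsup_d g_{n,d}^{BP}=\liminf_d g_{n,d}^{BP}=g_n^{BP,\infty}$, and uniform convergence lets us interchange the $n$ and $d$ limits, giving $\lim_{n}\lim_{d}g_{n,d}^{BP}=\lim_{d}\lim_{n}g_{n,d}^{BP}$. So the whole task reduces to a Cauchy estimate that is uniform in $n$.

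To obtain the Cauchy estimate I would compare $\langle x_i\rangle^{BP}_{0,d+2}$ with $\langle x_i\rangle^{BP}_{0,d}$ by a boundary interpolation. Introduce a parameter $t\in[0,1]$ that gradually switches on the two extra layers of check-node interactions in $T_{d+2}(i)\setminus T_{d}(i)$ and differentiate the corresponding Gibbs expectation of $x_i$ in $t$. On the BSC the loglikelihoods are bounded by $H(\epsilon)=\ln\frac{1-\epsilon}{\epsilon}$, so the $t$-derivative is a bounded linear combination of connected correlations $\langle x_i\, x_k\rangle^{BP}_{d+2}-\langle x_i\rangle^{BP}_{d+2}\langle x_k\rangle^{BP}_{d+2}$ between the root and nodes $k$ at graph distance strictly between $d$ and $d+2$. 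By the remark following Theorem~\ref{theorem-corr}, at high enough noise on the BSC these BP-tree correlations decay as $e^{-\text{dist}(i,k)/\xi(\epsilon)}$. Since $T_{d+2}(i)$ contains at most $(l_{\max}k_{\max})^{d+2}$ boundary nodes, one gets
$$\bigl|\langle x_i\rangle^{BP}_{0,d+2}-\langle x_i\rangle^{BP}_{0,d}\bigr|\le C\,(l_{\max}k_{\max})^{d+2}\,e^{-d/\xi(\epsilon)}.$$
The growth statement in Theorem~\ref{theorem-corr} gives $\xi(\epsilon)^{-1}>\ln(l_{\max}k_{\max})$ for $\epsilon$ close enough to $\epsilon_{\max}$, so the right-hand side decays geometrically in $d$.

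To pass from soft-bit estimates to the GEXIT functional, I would use that on the BSC $\mathcal{G}$ is a smooth (hence Lipschitz on $[-1,+1]$) function of its argument, because the channel output takes only two bounded values. Applying $\mathbb{E}_\code\mathbb{E}_{l^n}$ and summing the telescoping series in $d$ then yields
$$\sup_n\,\bigl|g_{n,d+2}^{BP}(\epsilon)-g_{n,d}^{BP}(\epsilon)\bigr|\le C'\alpha^d,\qquad 0<\alpha<1,$$
which is precisely the uniform Cauchy property needed to conclude the theorem.

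The main obstacle is rigorously justifying the BP-tree correlation decay alluded to in the remark after Theorem~\ref{theorem-corr}. Strictly speaking, the Dreifus--Klein--Perez cluster expansion of Section~\ref{section-3} is set up for the original MAP Gibbs measure on $G$ with i.i.d.\ loglikelihoods; on the covering tree $T_d(i)$ the loglikelihoods $l_{\pi(k)}$ are not independent, since distinct tree nodes $k\neq k'$ may satisfy $\pi(k)=\pi(k')$. Boundedness of the BSC loglikelihoods is what saves us: the polymer weights depend on the loglikelihoods only through uniformly bounded factors, so the expansion converges in a deterministic (noise-by-noise) sense on $T_d(i)$ and the resulting correlation bound is pointwise in $l^n$; the dependency structure is then harmless upon averaging. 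It is precisely this point that forces the restriction to the BSC rather than the full class $\mathcal{K}$, and it is the step that would need the most care to carry out in full detail.
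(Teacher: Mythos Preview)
Your proposal is correct and follows essentially the same route as the paper: both arguments compare the BP soft-bit estimate at two different depths via a boundary interpolation, invoke the pointwise (noise-by-noise) correlation decay on the computational tree that bounded BSC loglikelihoods afford, and use the resulting uniform-in-$n$ estimate to interchange the $n$ and $d$ limits. The only cosmetic differences are that the paper compares an arbitrary $d$ directly to a fixed smaller depth $d'$ (rather than telescoping $d\to d+2$) and removes boundary checks one at a time (rather than via a continuous $t$-parameter), but these are equivalent packagings of the same idea.
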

The proof is a simple application of the decay of correlations. We present it only for the BSC but it can also be extended to any convex combination of such channels and more generaly as long as $c(l)$ has a bounded support that diminishes as the noise parameter increases. The cases of unbounded support (such as BIAWGNC), or of LDPC codes at low noise, require more work and will not be discussed here. The present result complements the recent work \cite{KoRu} which concerns the bit-error-rate of LDPC codes for other message passing decoders in the regime where the error rate vanishes.

\section{\sc LDGM Codes: High Noise}\label{section-3}
In this section we prove theorem \ref{theorem-corr} and its corollary for  LDGM codes. It is convenient to set $K=l_{\max} k_{\max}$.
\begin{proof}[Proof of Theorem \ref{theorem-corr}, LDGM]
First we define the self-avoiding random walks on which the cluster expansion is based.
A self-avoiding walk $w$ between two variable (information-bit) nodes $a, b$ is
a sequence of variable nodes (denoted $v_1, v_2, \dots, v_{l+1}$) and checks
 (denoted $c_1, c_2, \dots, c_{l}$), $v_1,c_1,v_2,c_2,\dots,c_l,v_{l+1}$
such that $v_1=a, v_{l+1}=b$ and $\{v_{m}, v_{m+1}\} \in \partial c_m$ and $v_m\neq v_n, c_m\neq c_n$ for $m\neq n$. We also say that two variable nodes $a, b$ are connected if and only if there exists a self-avoiding walk from $a$ to $b$. 
Thus on a self-avoiding walk we do not repeat variable and check nodes. From any general walk between $a$ and $b$ we can extract a self-avoiding walk
$w$ between $a$ and $b$ which has all its clauses belonging to the parent walk (this is done by chopping off all the loops of the general walk). The
length $\vert w\vert$ of the walk is the number of variable nodes in it. If $a=b$ then the self-avoiding walk from $a$ to $b$ is the trivial walk $a$. We define the length of such walks to be zero. Let $W_{ab}$ denote the set of all self-avoiding walks between variable nodes $a, b$ and 
$W_{AB}=\cup_{a\in A, b\in B}W_{ab}$ (see figure \ref{figcorr}).
Fix some number $H>0$ (that will depend on $\epsilon$ later on). Denote by $\bad$ the set of all code-bit nodes $i$ (checks), such that $|l_i|>H$.
\begin{figure}
\begin{center}
\includegraphics[width=0.6\textwidth, height=0.3\textwidth]{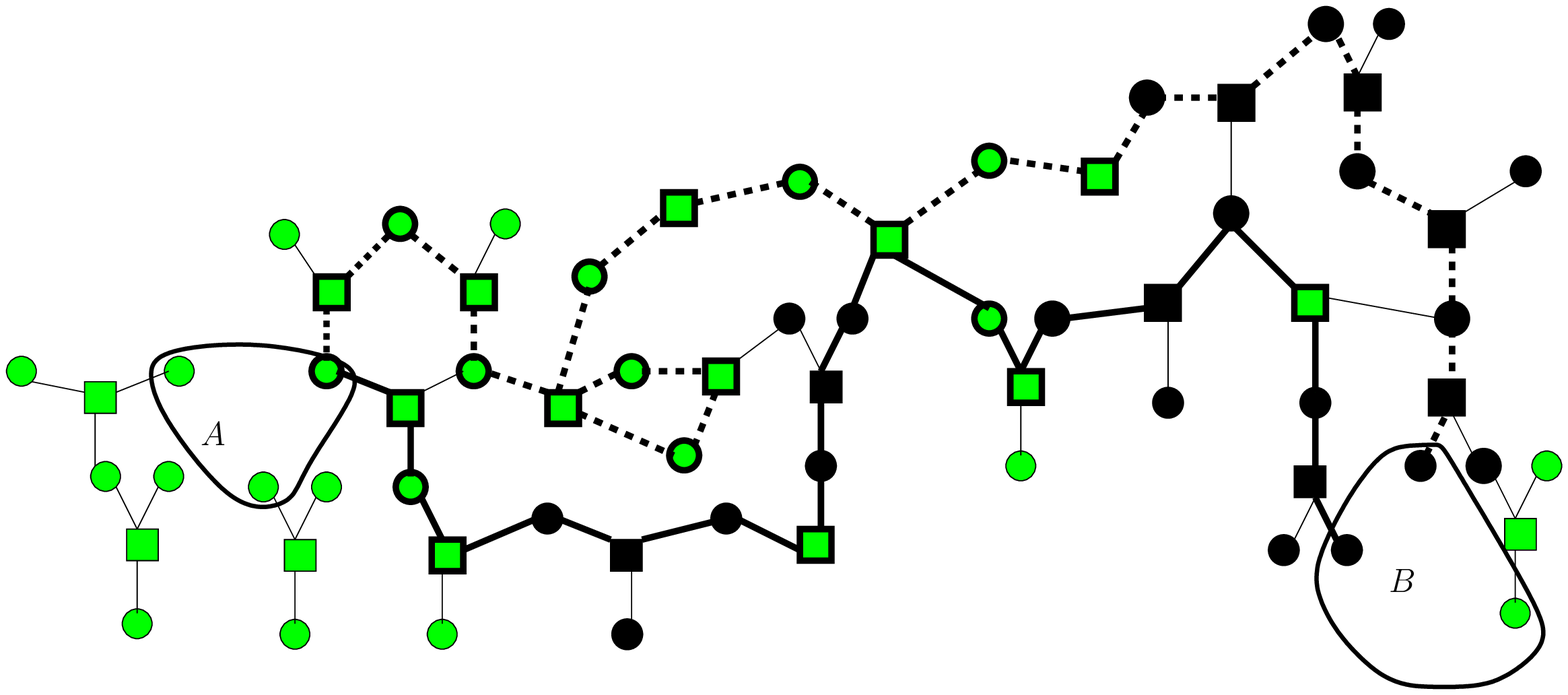}
\vspace{-0.3cm}
\caption{Each set $A$ and $B$ contains three variable nodes. The light squares denote the generator bits in the complement of $\bad$ and the dark squares denote the generator bits in $\bad$. The thick path is an example of a self-avoiding path between $A$ and $B$ which contributes to the upper bound. The dashed path is a non-self-avoiding path and does not contribute to the bound.}\label{figcorr}
\vspace{-0.8cm}
\end{center}
\end{figure}
We use the following (see Appendix \ref{Appendix-A} for the proof)
\begin{lemma}\label{lem:corrbound}
Consider any LDGM code with bounded left and right degree. Consider two sets of information-bit nodes $A$, $B$ with bounded support. We have 
\begin{align}\label{KP-bound}
\big|\langle\prod_{a\in A}u_a\prod_{b\in B}u_b\rangle-\langle\prod_{a\in A}u_a\rangle\langle\prod_{b\in B}u_b\rangle\big|
\le 2\sum_{w\in
W_{AB}}\prod_{i\in w} \rho_i  
\end{align}
where $\rho_i=1$, if $i\in \bad$ and  $\rho_i=e^{4|l_i|}-1$, if $i\notin \bad$. 
\end{lemma}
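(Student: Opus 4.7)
The plan is to exploit a duplicated-replica trick followed by a high-temperature cluster expansion of Dreifus--Klein--Perez type, treating bad checks with $|l_i| > H$ as trivially conducting sites so as to bypass the difficulty coming from unbounded loglikelihoods.

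First I would introduce two independent copies $u^{(1)},u^{(2)}$ of the information-bit configuration drawn from the LDGM Gibbs measure and rewrite the covariance as
\begin{equation*}
\Bigl\langle \prod_{a\in A} u_a \prod_{b\in B} u_b \Bigr\rangle - \Bigl\langle \prod_{a\in A} u_a \Bigr\rangle \Bigl\langle \prod_{b\in B} u_b \Bigr\rangle = \Bigl\langle \prod_{a\in A} u_a^{(1)} \bigl( \prod_{b\in B} u_b^{(1)} - \prod_{b\in B} u_b^{(2)} \bigr) \Bigr\rangle_{u^{(1)},u^{(2)}}.
\end{equation*}
The gauge change $v_a = u_a^{(1)} u_a^{(2)}$ then turns the bracketed difference into $\prod_{b\in B} u_b^{(1)}(1-\prod_{b\in B} v_b)$, which vanishes unless an odd number of $v_b$'s in $B$ equals $-1$, and turns each check factor into $e^{l_i \sigma_i^{(1)}(1+\tau_i)}$ with $\tau_i = \prod_{a\in\partial i} v_a$. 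This is where the factor $4$ in $e^{4|l_i|}$ appears: the coupling strength seen by the $v$-variables is effectively doubled.

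Next I would perform the cluster expansion recalled in Appendix~\ref{Appendix-A} on the marginal measure on $v$, writing each local check weight as $1$ plus a deviation and expanding the product as a sum over subsets $S$ of check nodes on which the deviation is selected. A standard gauge/parity argument, using that the $v_a$ are $\pm 1$-valued and that sums over isolated $v_a$ of any odd monomial vanish, kills every term whose associated subgraph does not contain a chain of checks connecting the support $A$ to the support $B$: only configurations linking $A$ and $B$ can couple the odd-flip requirement in $B$ to the observable in $A$. From each surviving $S$ I would extract a self-avoiding walk $w\in W_{AB}$ by chopping off the loops, and bound the residual sum over the ``decoration'' $S\setminus w$ by $1$ using the normalization of the auxiliary Gibbs measure, in the classical polymer-expansion style.

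The per-check bound then splits according to the set $\bad$: on good checks ($|l_i|\leq H$) the deviation is controlled by $e^{4|l_i|}-1$ via an elementary estimate on $|e^{l_i\sigma(1+\tau)}-1|$ with $\sigma,\tau\in\{\pm 1\}$, while on bad checks one simply uses $\rho_i=1$, absorbing the large fluctuations of $l_i$ into $\bad$; the overall prefactor $2$ comes from $|1-\prod_B v_b|\leq 2$. The main obstacle, and the reason this is more delicate than the standard Mayer expansion, is exactly the presence of unbounded loglikelihoods: a naive polymer expansion would produce uncontrolled moments $\mathbb{E}[e^{m|l|}]$ with $m$ depending on the overlap structure of the polymers, and it is the self-avoidance enforced by the DKP combinatorics, combined with the dichotomy between good and bad sites, that keeps the expansion summable uniformly in $n$ for every channel in the class $\mathcal{K}$.
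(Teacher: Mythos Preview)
Your outline follows the DKP strategy but omits the one ingredient that makes the ``bound the decoration by $1$'' step go through: positivity of the expansion factors. In the paper the replicated weight is first shifted to $e^{l_i(u_{\partial i}^{(1)}+u_{\partial i}^{(2)})+2|l_i|}$, so that $K_i := e^{l_i(u_{\partial i}^{(1)}+u_{\partial i}^{(2)})+2|l_i|}-1 \geq 0$ for every spin configuration. This nonnegativity is used three times: to replace the signed observable $f_Af_B$ by its absolute value; to over-count, summing over all pairs (self-avoiding walk $w$, remainder $G'\subseteq\bad^c\setminus w$) even though several walks can be extracted from the same connecting $G$; and, crucially, to insert back the factors $\prod_{i\in w\setminus\bad}(1+K_i)\geq 1$ so that the numerator reconstitutes $Z'^2$ exactly. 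Your gauge-changed deviation $e^{l_i\sigma(1+\tau)}-1$ takes values in $\{0,\,e^{\pm 2l_i}-1\}$ and is negative whenever $\tau=+1$ and $l_i\sigma<0$, so none of these three steps is valid; in particular ``normalization of the auxiliary Gibbs measure'' does not control $\sum_{S'}\prod_{i\in S'}|\text{dev}_i|$. Incidentally, the exponent $4$ in $\rho_i=e^{4|l_i|}-1$ does not come from the doubling $1+\tau$ you invoke (that only produces $e^{2|l_i|}-1$) but precisely from the extra $+2|l_i|$ shift: $K_i\leq e^{2|l_i|+2|l_i|}-1$.

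A secondary issue is the connectivity argument. The claim that ``sums over isolated $v_a$ of any odd monomial vanish'' is not what does the work here: your deviation $\tfrac{1+\tau_i}{2}(e^{2l_i\sigma_i}-1)$ is not a monomial in the $v_a$'s, and the constant ``$1$'' in $(1-\prod_{b\in B} v_b)$ survives free summation over any $v_b$. The real mechanism, in either your gauge or the paper's formulation, is the replica exchange $u^{(1)}\leftrightarrow u^{(2)}$: when $G$ fails to connect $A$ and $B$ through $G\cup\bad$, the spin sum factors over disjoint pieces, and the piece carrying $u_B^{(1)}-u_B^{(2)}$ is odd under the swap and hence vanishes. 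Using the symmetric identity $\tfrac{1}{2}\langle(u_A^{(1)}-u_A^{(2)})(u_B^{(1)}-u_B^{(2)})\rangle$ rather than your asymmetric one makes both pieces manifestly odd and matches the paper's $I=II$ argument.
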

\noindent The crucial feature of this lemma is that the $\rho_i$ are independent random variables because the walks are self-avoiding. Consequently, averaging over the noise realization in \eqref{KP-bound} 
\begin{align}\label{eq:avgcorrkp}
\bE_{\noise}\big|\langle\prod_{a\in A}u_a\prod_{b\in B}u_b\rangle-\langle\prod_{a\in A}u_a\rangle\langle\prod_{b\in B}u_b\rangle\big| 
&\le  2\sum_{w\in
W_{AB}}\prod_{i\in w} \bE[\rho_i]
\end{align}
Now,
\begin{align}
\bE[\rho_i]  & \leq  \bE[\rho_i\mid i\notin \bad] \mathbb{P}(i\notin\bad) + \bE[\rho_i\mid i\in \bad] \mathbb{P}(i\in\bad)
\cr &
\le  (e^{4H}-1)+\prob\big(|l|> H\big) = \delta(\epsilon, H)
\end{align}
For our class of channels we can choose $H=H(\epsilon)$  such that $K\delta(\epsilon, H(\epsilon))<1$.  
We get
\begin{align}
\bE_{\noise}\big|\langle\prod_{a\in A}u_a\prod_{b\in B}u_b\rangle - \langle\prod_{a\in_A}u_a\rangle\langle\prod_{b\in_B}u_b\rangle\big|
&
\le  2\sum_{w\in
W_{AB}} \delta(\epsilon)^{\vert w\vert}
\cr &
\le  2\vert A\vert \vert B\vert\sum_{d\geq {\rm dist}(A,B)} (K\delta(\epsilon, H(\epsilon)))^{d}
\cr &
\leq 
\frac{2|A||B|}{1-K\delta(\epsilon, H(\epsilon))}(K\delta(\epsilon))^{\dist A B}
\label{good-bound}
\end{align}
The second inequality is obtained by noticing that the number of selfavoiding random walks of length $\vert w\vert$ is certainly bounded by $K^{\vert w\vert}$. The factor $\vert A\vert \vert B\vert$ accounts for the maximum possible number of initial and final vertices. 
The correlation decay of the theorem is in fact a special case of this last bound for the choice $A=\partial i$ and $B=\partial j$.
\end{proof}

\vskip 0.25cm

We now look at GEXIT functions of the MAP and BP decoders.
For LDGM codes the functional giving the MAP-GEXIT function in \eqref{gexit} is 
\begin{align}\label{eq:GEXIT1}
\mathcal{G}(\langle x_i\rangle_0)=\frac{\Lambda'(1)}{P'(1)}  \int dl_i \frac{dc(l_i)}{d\epsilon} \bE_{l^{n\setminus i}}\ln\biggl\{  \frac{1+\langle x_i\rangle_0\tanh l_i}{1+\tanh l_i}\biggr\} 
\end{align}
Derivations of this formula can be found in \cite{Ruediger}, \cite{Macris2}.

The BP-GEXIT curve is given by the same functionals with $\langle x_i\rangle_0$ replaced by $\langle x_i\rangle_{0, d}^{BP}$. Consider $N_d(i)$ the neigborhood of node $i$, radius $d$ an even integer (all the vertices at graph-distance less or equal to $d$ from $i$). As is well known for an ensemble LDGM($\Lambda,P$) with bounded degrees, given $d$, if $n$ is large enough, the probability that $N_d(i)$ is a tree is $1-O(\frac{\gamma^d}{n})$ (where $\gamma$ depends only on $l_{\max}, k_{\max}$).
Thus when $d$ is fixed and $n\to+\infty$ the computational tree $T_{d}(i)$ and the neighborhood $N_d(i)$ match with high probability. This implies that 
\begin{equation}
\lim_{d\to+\infty}\lim_{n\to+\infty}g_{n,d}^{BP}(\epsilon) = 
\mathbb{E}_\code[\mathcal{G}(\langle x_i\rangle_{0, N_d(i)})\vert N_d(i)\,\text{is a tree}]
\end{equation}
where $\langle x_i\rangle_{0, N_d(i)}$ is the Gibbs bracket associated to the subgraph $N_d(i)$. The right hand side can exactly computed by performing the statistical mechanical sums on a tree and yields the DE formulas
\begin{align}\label{DE-gexit-ldgm}
\lim_{d\to+\infty}\lim_{n\to+\infty}g_{n,d}^{BP}(\epsilon)=\lim_{d\to \infty}\frac{\Lambda'(1)}{P'(1)}\int dl \frac{dc(l)}{d\epsilon} \bE_{\Delta^{(d)}}\ln\bigg\{\frac{1+\tanh \Delta^{(d)}\tanh l}{1+\tanh l}\bigg\} 
\end{align}
where both limits exist and
\begin{equation}
\tanh\Delta^{(d)} = \prod_{i=1}^{k}\tanh v_i^{(d)} 
\end{equation}
The $v_i^{(d)}$ are i.i.d random variables with distribution obtained from the iterative system of DE equations
\begin{align}
\eta^{(d)}(v) & =\sum_{l}\frac{l\Lambda_l}{\Lambda^\prime(1)} \int\prod_{i=1}^{l-1}du_i\,\ueta^{(d)}(u_i)\delta(v-\sum_{i=1}^{l-1}u_i)  \\
\ueta^{(d)}(u)&=\sum_{k}\frac{kP_k}{P^\prime(1)} \int dlc(l)\prod_{a=1}^{k-1}dv_a\,\eta^{(d-1)}(v_a) 
\delta(u-\tanh^{-1}(\tanh l \prod_{i=1}^{k-1}\tanh v_a) 
\end{align} 
with the initial condition $\eta^{(0)}(v)=\delta(v)$. It is well known that these equations are an iterative version of the replica fixed point equation \cite{Saad}.

\vskip 0.25cm

\begin{proof}[Proof of corollary \ref{corollary-1}, LDGM]
Expanding the logarithm in \eqref{eq:GEXIT1} and using Nishimori identities as in \cite{Macris2} we obtain  the expansion 
\begin{align}\label{gexitEXP}
\frac{\Lambda'(1)}{P'(1)} \sum_{p=1}^{+\infty}
&\frac{T_{2p}(\epsilon)
}{2p(2p-1)}\bigl(\bE_{\code,l^{n\setminus i}}[\langle x_i\rangle^{2p}_0]-1\bigr) \end{align}
where we recall that 
\begin{equation}
T_{2p}(\epsilon)=\frac{d}{d\epsilon}\int_{-\infty}^{+\infty} dl\, c(l)(\tanh l)^{2p}
\end{equation}
Note that in order to get the above expansion, it is important to use \eqref{eq:GEXIT1} as expressed here in terms of the {\it extrinsic} estimate.
Obviously, the series is absolutely convergent, uniformly with respect to $n$, 
for the class of channels $\mathcal{K}$. Thus by dominated convergence, the proof will be complete if we show that
\begin{equation}\label{right}
 \lim_{n\to+\infty}\bE_{\code,l^{n\setminus i}}[\langle x_i\rangle^{2p}_0]
=\lim_{d\to+\infty}\bE_{\Delta^{(d)}}[(\tanh\Delta^{(d)})^{2p}]
\end{equation}
Indeed one can then compute the $n\to+\infty$ limit term by term 
in 
\eqref{gexitEXP} and then resum the resulting series (which is again absolutely convergent, uniformly with respect to $d$)  to obtain 
\eqref{DE-gexit-ldgm}. 

Let us show \eqref{right}.
As pointed out before for $d$ fixed and $n\to+\infty$,
 $N_d(i)$ is a tree with high probability. Thus,
\begin{equation}\label{lim}
 \lim_{n\to\infty}\bE_{\code,l^{n\setminus i}}[\langle x_i\rangle^{2p}_0]
=\lim_{n\to\infty}\bE_{\code,l^{n\setminus i}}[\langle x_i\rangle^{2p}_0\vert N_d(i)\,\text{is a tree}]
\end{equation}
Notice that all paths connecting the bit $i$ with those outside 
$N_d(i)$ have a length at least equal to $d$, so because of Theorem 1 in the high noise regime $x_i$ is very weakly correlated to the complement of $N_d(i)$. Therefore we may expect that
\begin{align}\label{weakcorr}
\lim_{d\to+\infty}\lim_{n\to+\infty}\mathbb{E}_{\code, l^{n\setminus i}}[\vert\langle x_i\rangle_0^{2p} - \langle x_i\rangle_{0, N_d(i)}^{2p}\vert \mid N_d(i)\,\text{tree}] =0
\end{align}
Assuming for a moment that this is true we get from \eqref{lim},
\begin{equation}\label{limlim}
 \lim_{n\to\infty}\bE_{\code,l^{n\setminus i}}[\langle x_i\rangle^{2p}_0]
=\lim_{d\to\infty}\lim_{n\to\infty}\bE_{\code,l^{n\setminus i}}[\langle x_i\rangle^{2p}_{0, N_d(i)}\vert N_d(i)\,\text{is a tree}]
\end{equation}
and, when $N_d(i)$ is a tree, the Gibbs average $\langle x_i\rangle^{2p}_{0, N_d(i)}$ is explicitly computable and the right hand side of \eqref{limlim} reduces to
\begin{equation}\label{eq:BP}
\lim_{d\to +\infty}\mathbb{E}_{\Delta^{(d)}}[(\tanh \Delta^{(d)})^{2p}]
\end{equation}
This proves \eqref{right}.

Our task is now to prove \eqref{weakcorr}. 
Let $\mathring{N}_d(i)$ be the set of checks that are at distance $d$ from $i$. We order the checks $\in\mathring{N}_d(i)$ in a given (arbitrary) way, and call $\langle - \rangle_{0;\leq k}$ the Gibbs average with $l_k=0$ for the $k$ first checks of $\mathring{N}_d(i)$ (and $l_i=0$ for the root node). For the first one (call it $1$) we use 
$e^{l_{1}x_{1}}=\cosh l_{1}+x_{1}\sinh l_{1}$ to find
\begin{align}
\langle x_i\rangle_0=\langle x_i\rangle_{0;\leq 1}+\frac{\tanh l_{1} \big(\langle x_ix_{1}\rangle_{0;\leq 1}-\langle x_i\rangle_{0;\leq 1}\langle x_{1}\rangle_{0;\leq 1}\big)}{ 1+\langle x_{1}\rangle_{0;\leq 1}\tanh l_{1}}
\end{align}
Therefore
\begin{align}
\vert\langle x_i\rangle_0^{2p} - \langle x_i\rangle_{0;\leq 1}^{2p}\vert &
\leq 
2p\vert\langle x_i\rangle_0 - \langle x_i\rangle_{0;\leq 1}\vert
\cr &
\leq
2p\, t_1 \vert\langle x_ix_{1}\rangle_{0;\leq 1}-\langle x_i\rangle_{0;\leq 1}\langle x_{1}\rangle_{0;\leq 1}\vert
\end{align}
where 
\begin{equation}
t_k=\frac{\vert\tanh l_{k}\vert}{ 1-\vert\tanh l_{k}\vert}
\end{equation}
We can now take the second check of $\mathring{N}_d(i)$ (call it $2$) and show
\begin{align}
\vert\langle x_i\rangle_{0;\leq 1}^{2p} -  \langle x_i\rangle_{0;\leq 2}^{2p} \vert 
\leq
2p\, t_{2} \vert\langle x_ix_2\rangle_{0;\leq 2}-\langle x_i\rangle_{0;\leq 2}\langle x_{2}\rangle_{0;\leq 2}\vert
\end{align}
We can repeat this argument for all nodes of $\partial N_d(i)$ and use the triangle inequality to obtain
\begin{align}
\vert\langle x_i\rangle_{0}^{2p} -  \langle x_i\rangle_{0,N_d(i)}^{2p}\vert 
\leq
2p \sum_{k\in\mathring{N}_d(i)} t_k \vert\langle x_ix_k\rangle_{0;\leq k}-\langle x_i\rangle_{0;\leq k}\langle x_k\rangle_{0;\leq k}\vert
\end{align}
Indeed the Gibbs average with all $l_k=0$ for all $k\in \mathring{N}_d(i)$ is equal to $\langle x_i\rangle_{0,N_d(i)}$.
Now using the bound \eqref{good-bound} in the proof of theorem \ref{theorem-corr} for $K\delta(\epsilon)<1$, the last inequality implies
\begin{align}\label{finaly}
\mathbb{E}_{\code,l^{n\setminus i}}[\vert \langle x_i\rangle_0^{2p} -
\langle x_i\rangle_{0, N_d(i)}^{2p}\vert \mid N_d(i)\,\text{tree}]
\leq 
\frac{4p K^2\mathbb{E}[t]}{1-K\delta(\epsilon)}  K^d(K\delta(\epsilon) )^d
\end{align}
Note that for channels $\mathcal{K}$, for non-zero noise,
\begin{equation}
\mathbb{E}[t] =\mathbb{E}\biggl[\frac{\vert\tanh l\vert}{ 1-\vert\tanh l\vert}\biggr] \leq \mathbb{E}[e^{2\vert l\vert}]<\infty
\end{equation}
The right hand side of \eqref{finaly} does not depend on $n$, so it is immediate that \hfill
$\lim_{d\to+\infty}\lim_{n\to+\infty}$ vanishes as long as the noise is high enough such that $K^2\delta(\epsilon)<1$. This proves 
\eqref{weakcorr} and the corollary.
\end{proof} 
\vskip 0.25cm
To conclude, let us remark that, for the BIAWGNC the GEXIT formulas simplify considerably and there is a clear relationship to the magnetization,
\begin{align}\label{gexitAWGN}
\mathcal{G}(\langle x_i\rangle)&=\frac{1}{\epsilon^3}\frac{\Lambda'(1)}{P'(1)}\big(1-\bE_{\noise}[\langle x_i\rangle]\big)
\cr &
=
\frac{1}{\epsilon^3}\frac{\Lambda'(1)}{P'(1)}(1-
\bE_{l^n}[\tanh( l+ \tanh^{-1}\langle x_i\rangle_0)])
\end{align}
%\begin{align}\label{gexitbec}
%\mathcal{G}(\langle x_i\rangle)=\ln 2\frac{\Lambda'(1)}{P'(1)}
%\big(1-\bE_{l^{n\setminus i}}[\langle x_i\rangle_0]\big),\,\,\, \text{(BEC)}
%\end{align}
and
\begin{align}\label{gexit-gaussian}
\lim_{d\to+\infty}\lim_{n\to+\infty}g_{n,d}^{BP}(\epsilon)=\frac{1}{\epsilon^3}\frac{\Lambda'(1)}{P'(1)}(1-\bE_{l,\Delta^{(d)}}[\tanh (l+ \Delta^{(d)})])
\end{align}
The proof of corrolary 1 for BIAWGNC can thus proceed without expansions and is slightly simpler. The main ideas can be found in \cite{KuMa08itw} and we do not repeat them here. Note also that for the BEC there are similar simplifications that occur: this allows us to make a proof which avoids the second condition in the class of channels $\mathcal{K}$.

\section{\sc LDPC Codes: Low Noise}\label{section-4} 

In this section we prove theorem \ref{theorem-corr} and corollary \ref{corollary-1} for LDPC codes in a low noise regime.  As explained in section \ref{section-2} we first transform the problem to a dual one. The duality transformation reviewed here essentially is an application of Poisson's summation formula over commutative groups, and has been thoroughly discussed in the context of codes on graphs in \cite{Forney}. Here we need to know how the correlations transform under the duality, a point that does not seem to appear in the related literature.

\subsection{Duality formulas for the correlations}

Let $C$ be a binary parity check code and $C^{\perp}$ its dual. We apply the Poisson summation formula 
\begin{equation}
\sum_{x^n \in C} f(x^n) = \frac{1}{|C|}
\sum_{\tau^n\in C^{\perp}} \widehat{f}(\tau^n)
\end{equation}
where the Fourier (or Hadamard) transform is,
\begin{equation}
\widehat{f}(\tau^n) = \sum_{x^n \in \{-1,+1\}^n} f(x^n) e^{i\frac{\pi}{4} \sum_{j=1}^n (1-\tau_j)(1-x_j)}
\end{equation}
to the partition function $Z$ of an LDPC code $C$. The dual code $C^{\perp}$ is an LDGM with codewords given by
$\tau^n$ where 
\begin{equation}\label{codewords}
\tau_i=\prod_{a\in \partial i} u_a
\end{equation}
and $u_a$ are the $m$ information bits. A straigthforward application of the Poisson formula then yields the extended form of the MacWilliams identity,
\begin{equation}\label{duality}
 Z= \frac{1}{\vert C^\perp\vert}e^{\sum_{j=1}^n l_j}Z_{\perp}
\end{equation}
where 
\begin{equation}
Z_{\perp}  =  \sum_{u^m\in\{-1,+1\}^m} \prod_{i=1}^n (1+ e^{-2l_i}\prod_{a\in \partial i} u_a)
\end{equation}
This expression formaly looks like the partition function of an LDGM code with ``channel half-loglikelihoods'' $g_i$ such that $\tanh g_i = e^{-2l_i}$. This is truly the case only for the BEC($\epsilon$) where $l_i=0,+\infty$ and hence $g_i=+\infty,0$ which still correspond to a BEC($1-\epsilon$). The logarithm of partition functions is related to the input-output entropy and one recovers (taking the $\epsilon$ derivative) the well known duality relation between EXIT functions of a code and its dual on the BEC \cite{BrinkAshikmin}. For other channels however this is at best a formal (but still useful) analogy since the weights are negative for $l_i<0$ (and $g_i$ takes complex values). We introduce a bracket $\langle-\rangle_{\perp}$ which is not a true probabilistic expectation (but it is still linear)
\begin{equation}
\langle f \rangle_{\perp} = \frac{1}{Z_{\perp}}
\sum_{u^m\in\{-1,+1\}^m} f(u^m)\prod_{i=1}^n (1+ e^{-2l_i}\prod_{a\in i} u_a)
\end{equation}
The denominator may vanish, but it can be shown that when this happens the numerator also does so in a way that ensures the finiteness of the ratio (this becomes clear in subsequent calculations).
Taking logarithm of \eqref{duality} and then the derivative with respect to $l_i$ we find
\begin{align}\label{firstderivative}
\langle x_i \rangle
& = \frac{1}{\tanh 2l_i} -\frac{\langle\tau_i\rangle_{\perp}}{\sinh 2l_i} 
\end{align}
and differentiating once more with respect to  $l_j$, $j\neq i$ 
\begin{align}\label{secondderivative}
\langle x_i x_j\rangle - \langle x_i\rangle\langle x_j\rangle  =  \frac{\langle \tau_i\tau_j \rangle_{\perp} - \langle \tau_i \rangle_{\perp}\langle \tau_j \rangle_{\perp}}{\sinh 2l_i\sinh 2l_j}
\end{align}
We stress that in \eqref{firstderivative}, \eqref{secondderivative}, $\tau_i$ and $\tau_j$ are given by products of information bits \eqref{codewords}. The left hand side of \eqref{firstderivative} is obviously bounded. It is less obvious to see this directly on the right hand side and here we just note that the pole at $l_i=0$ is harmless since, for $l_i=0$, the bracket has all its ``weight`` on configurations with $\tau_i=1$. Similar remarks apply to \eqref{secondderivative}.
In any case, we will beat the poles by  using the following trick. For any $0<s<1$ and $\vert a\vert\leq 1$ we have $\vert a\vert\leq \vert a\vert^s$, thus
\begin{equation}\nonumber
\mathbb{E}_{l^n}[\vert
\langle x_ix_j\rangle - \langle x_i\rangle \langle x_j\rangle\vert] \leq 2^{1-s} \mathbb{E}_{l^n}[\vert
\langle x_i x_j\rangle - \langle x_i\rangle \langle x_j\rangle\vert^s]
\end{equation}
and using \eqref{secondderivative} and Cauchy-Schwarz
\begin{align}\label{spower}
\mathbb{E}_{l^n}[\vert
\langle x_ix_j\rangle - \langle x_i\rangle \langle x_j\rangle\vert]   \leq 2^{1-s}\mathbb{E}[(\sinh 2l)^{-2s}]
\mathbb{E}_{l^n}[\vert \langle \tau_i\tau_j \rangle_{\perp} - \langle \tau_i \rangle_{\perp}\langle \tau_j \rangle_{\perp}\vert^{2s}]^{1/2}
\end{align}
The prefactor is always finite for $0\leq s <\frac{1}{2}$ for our class of channels $\mathcal{K}$.  For example for the BIAWGNC we have
\begin{equation}\label{sinhbound}
\mathbb{E}[(\sinh 2l)^{-2s}]
\leq
\frac{c}{\vert 1-2s\vert}e^{-c^\prime\frac{s(1-2s)}{\epsilon^{2}}}
\end{equation}
for purely numerical constants $c, c^\prime>0$ and for the BSC we have
\begin{equation}
\mathbb{E}[(\sinh 2l)^{-2s}] \leq \biggl(\frac{2\epsilon(1-\epsilon)}{1-2\epsilon}\biggr)^{2s}
\end{equation}

\subsection{Decay of correlations for low noise}

We will prove the decay of correlations by applying a high temperature cluster expansion technique to $\mathbb{E}_{l^n}[\vert \langle \tau_i\tau_j \rangle_{\perp} - \langle \tau_i \rangle_{\perp}\langle \tau_j \rangle_{\perp}\vert^{2s}]$. As explained in section \ref{section-2} we need a technique that does not use the positivity of the Gibbs weights.
In appendix B we give a streamlined derivation of an adaptation of Berretti's 
expansion.
\begin{equation}\label{eq:berrettiexp}
\langle \tau_i\tau_j\rangle_{\perp}- \langle \tau_i\rangle_{\perp}\langle \tau_j\rangle_{\perp} 
= \frac12 \sum_{\hat{X}} K_{i,j}(\hat{X}) \Bigl(\frac{Z_{\perp}(\hat{X}^c)}{Z_{\perp}}\Bigr)^2 
\end{equation}
where 
\begin{equation}\label{eq:Kij}
K_{i,j}(\hat{X}) \equiv \sum_{\substack{u_a^{(1)}, u^{(2)}_a \\ a
\in \hat{X} }}\sum_{\substack{\Gamma ~\text{compatible} \\
\text{with} \hat{X}}} 
(\tau_i^{(1)}-\tau_i^{(2)})(\tau_j^{(1)}-\tau_j^{(2)}) \prod_{k\in \Gamma} E_k 
\end{equation}
and 
\begin{align}\label{eq:Ea}
E_k = \tau_k^{(1)} e^{-2 l_k} +\tau_k^{(2)} e^{-2 l_k} +\tau_k^{(1)}\tau_k^{(2)} e^{-4 l_k}
\end{align}
Here $u_a^{(1)}$ and $u_a^{(2)}$ are two independent copies of the information bits (these are also known as real
replicas) and $\tau_k^{(\alpha)}= \prod_{a\in k} u_a^{(\alpha)}$. To explain
what are $\hat{X}$ and $\Gamma$ we will refer to $a$-nodes (check nodes in the
Tanner graph representing the LDPC code) and $i$-nodes (variable nodes in the
Tanner graph representing the LDPC code). Given a subset $S$ of nodes of the
graph let $\partial S$ be the subset of neighboring nodes. 
In \eqref{eq:berrettiexp} the sum over $\hat{X}$ is carried over clusters of
$a$-nodes such that ``$\hat{X}$ is connected via hyperedges'': this means that a) $\hat{X}=\partial X$ for some connected subset $X$ of $i$-nodes; b) $X$ is connected if any pair of  $i$-nodes can be joined by a path all of whose variable nodes lie in $X$;  c) $\hat{X}$ contains both $\partial i$ and
$\partial j$. In the sum \eqref{eq:Kij} $\Gamma$ is a set of $i$-nodes (all
distinct). We say that ``$\Gamma$ 
is compatible with $\hat{X}$'' if: (i) $\partial\Gamma\cup
\partial i\cup\partial j = \hat{X}$, (ii) $\partial\Gamma \cap
\partial i \neq \phi$ and $\partial\Gamma \cap
\partial j \neq \phi$, (iii) there is a walk connecting $\partial i$ and $\partial j$ such that all its variable nodes are in $\Gamma$.
Finally,
\begin{equation}\label{eq:reducedpartitionfunction}
Z_{\perp}(\hat{X}^c)  = \sum_{\substack{u_a \\ a\in  \hat{X}^c}} 
\prod_{\substack{\text{all}\; i \;\text{s.t.} \\ \partial i \cap \hat{X} =
\phi}} (1 + e^{-2 l_i}\prod_{a\in i} u_a )
\end{equation}
The figure Fig. \ref{fig:berrettifigure} gives an example for all the sets appearing above.
\begin{figure}\label{fig:berrettifigure}
\begin{center}
\includegraphics[width=0.9\textwidth, height=0.7\textwidth]{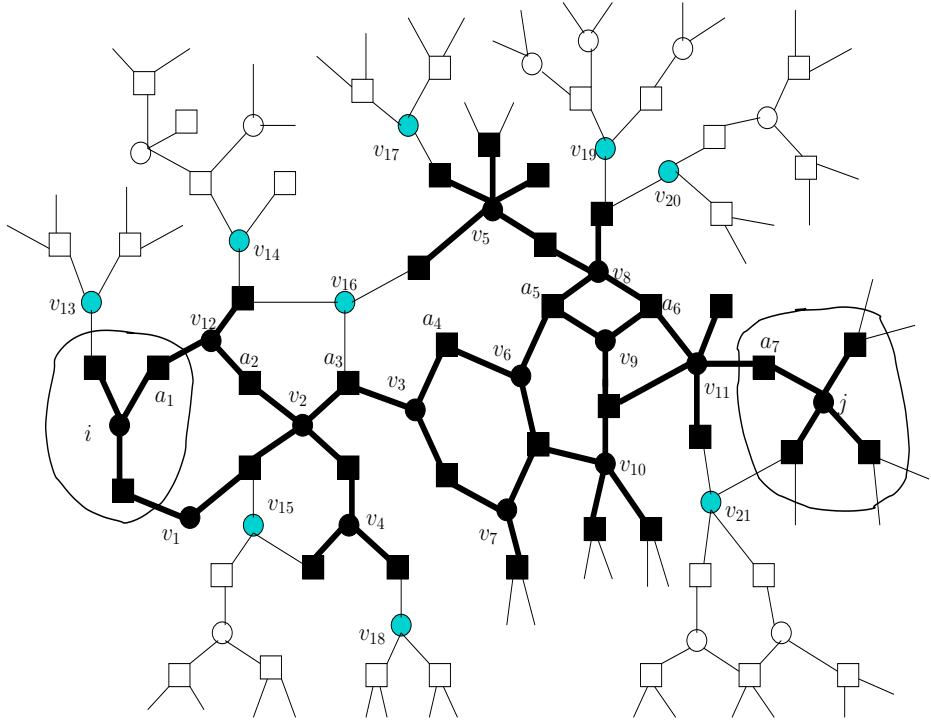}
\caption{In this figure we explain the various sets appearing in the cluster
expansion \eqref{eq:berrettiexp}. The Tanner graph represents the LDPC code with
variable nodes ($i$-nodes) denoted by circles and check nodes ($a$-nodes denoted
by squares). In this example the set $\hat{X}$ is the set of {\em dark} check nodes. It is easy
to verify that this choice of $\hat{X}$ satisfies all our conditions. Firstly,
let $X=\{i,j,v_1, v_2, v_3, v_4, v_5, v_6, v_7, v_8, v_9, v_{10}, v_{11},
v_{12}\}$ be a set of variable nodes (these are denoted by {\em dark} circles in
the figure). It is easy to check that the set of 
neighbours of $X$ is given by the {\em dark} check nodes which is $\hat{X}$.
Hence $\hat{X} = \partial X$. Secondly, any two variable nodes in $X$ are
connected by a path all of whose variable nodes lie in $X$, and thirdly,
$\hat{X}$ contains both $\partial i$ and $\partial j$. One choice for
$\Gamma = \{v_2, v_3, v_4, v_5, v_6, v_7, v_8, v_{10}, v_{11}, v_{12}\}$. It is
easy to check that $\Gamma$ is compatible with $\hat{X}$. The walk
$\{a_1v_{12}a_2v_{2}a_3v_{3}a_4v_{6}a_5v_{8}a_6v_{11}a_7\}$ connects $\partial
i$ and $\partial j$ and all its variable nodes lie in $\Gamma$. Another choice for $\Gamma$ would be the set $\{v_2, v_3, v_4, v_5, v_6, v_7, v_8, v_9, v_{10}, v_{11}, v_{12}\}$. In the
definition of $Z_{\perp}(\hat{X}^c), Z(\hat{X}^c)$ the {\em light} variable nodes, $v_{13},
v_{14}, v_{15}, v_{16}, v_{17}, v_{18}, v_{19}, v_{20}, v_{21}$, are not
present because they have a non-empty intersection with $\hat{X}$. }
\end{center}
\end{figure}

%Our main result is on the average correlation between any two codebits defined by 
%\begin{equation}\nonumber
%C_P(i,j)=\mathbb{E}_{l^n}[\vert\langle \sigma_i\sigma_j \rangle_P - \langle \sigma_i\rangle_P \langle \sigma_j\rangle_P\vert]
%\end{equation}
We are now ready to prove the theorem on decay of correlations.

\begin{proof}[Proof of theorem \ref{theorem-corr}, LDPC]
Because of \eqref{spower} it suffices to prove that \\ 
$\mathbb{E}_{l^n}[\vert \langle \tau_i\tau_j \rangle_\perp - \langle \tau_i \rangle_\perp\langle \tau_j \rangle_\perp\vert^{2s}]$
decays. 

The first step is to prove 
\begin{equation}\label{one-bound}
\biggl\vert\frac{Z_{\perp}(\hat{X}^c)}{Z_{\perp}}\biggr\vert \leq 1
\end{equation}
This ratio is not easily estimated directly because 
the weights in $Z_{\dcode}$ are not positive. However we can use the duality transformation \eqref{duality} backwards to get a new ratio of partition functions with positive weights,
\begin{equation}
\frac{Z_{\perp}(\hat{X}^c)}{Z_{\perp}} = \biggl(\exp{\sum_{\substack{\text{all}\,i\, \text{s.t} \\ \partial i \cap \hat{X} \neq\phi}} l_i}\biggr)\frac{\vert C^{\perp}(\hat{X}^c)\vert}{\vert C^{\perp}\vert} \frac{Z(\hat{X}^c)}{Z}
\end{equation}
with
\begin{equation}
Z(\hat{X}^c) = \sum_{\substack{x_i \\
\partial i \cap \hat{X} =\phi}} \prod_{\substack{\text{all}\,i\, \text{s.t} \\ \partial i \cap \hat{X} =\phi}} e^{l_ix_i} \prod_{a\in \hat{X}^c} \frac{1}{2}(1+\prod_{\substack{i\in a\,\text{and}\\ \partial i\cap\hat{X}=\phi}}x_i)
\end{equation}
This is the partition function corresponding to the subgraph induced by $a$-nodes of $\hat{X}^c$ and $i$-nodes such that $i\,\text{s.t}\,\partial i\cap\hat{X}=\phi$. Moreover $C^{\perp}(\hat{X}^c)$ is the dual of the later code $C(\hat{X}^c)$ defined on the subgraph. By standard properties of the rank of a matrix, the rank of the parity check matrix of $C(\hat{X}^c)$, which is obtained by removing rows (checks) and columns (variables) from the parity check matrix of $C$, is smaller than the rank of the parity check matrix of $C$. This implies $\vert C^{\perp}(\hat{X}^c)\vert \leq \vert C^{\perp}\vert$. Moreover 
\begin{equation}
\biggl(\exp{\sum_{\substack{\text{all}\,i\, \text{s.t} \\ \partial i \cap \hat{X} \neq\phi}} l_i}\biggr)
Z(\hat{X}^c) \leq Z
\end{equation}
To see this one must recognize that the left hand side is the sum of terms of $Z$ corresponding to $\sigma^n$ such that $\sigma_i=+1$ for $\partial i\cap\hat{X}\neq \phi$ (and all terms are $\geq 0$). These remarks imply 
\eqref{one-bound}.

Using 
$\vert\sum_{i}a_i\vert^{2s} \le \sum_{i} \vert a_i\vert^{2s}
$ for $0<2s<1$ and \eqref{one-bound} we find 
\begin{equation}\label{final}
\mathbb{E}_{l^n}[\vert \langle \tau_i\tau_j \rangle_\perp - \langle \tau_i \rangle_\perp\langle \tau_j \rangle_\perp\vert^{2s}] \leq 
\frac{1}{2^{2s}}\sum_{\hat{X}} \bE_{\noise}\bigl[|K_{i,j}(\hat{X})|^{2s}\bigr]  
\end{equation}
Trivially bounding the spins in $\eqref{eq:Ea}$ we deduce 
\begin{align}\nonumber
\bE_{\noise}\bigl[|K_{i,j}(\hat{X})|^{2s}\bigr]   & \leq 4^{\vert \hat X\vert}\sum_{\substack{\Gamma ~\text{compatible} \\
\text{with} \hat{X}}}(2^{2s}\bE[e^{-4sl}] + \bE[e^{-8sl}])^{\vert \Gamma\vert}
\\  &
\leq
 4^{\vert \hat X\vert}
\sum_{\substack{\Gamma ~\text{compatible} \\
\text{with} \hat{X}}} 2^{(4s+1)\vert\Gamma\vert}
\Delta(\epsilon)^{\vert \Gamma\vert}
\label{K1bound}
\end{align}
where 
\begin{equation}\label{Delta}
 \Delta(\epsilon)= 2^{2s}\bE[e^{-4sl}] + \bE[e^{-8sl}]
\end{equation}
%Now let us set $s=\frac{1}{16}$ and take 
%$\epsilon^2\leq (10\ln 2)^{-1}$ for simplicity. The bound becomes
%\begin{equation}\nonumber
%T_1(\hat{X})^2 \leq  4^{\vert \hat X\vert}
%\sum_{\substack{\Gamma ~\text{compatible} \\
%\text{with} \hat{X}}} 
%e^{-\frac{1}{8\epsilon^2}\vert \Gamma\vert}
%\end{equation}
Since $\Gamma$ is compatible with $\hat X$ we necessarily have
$\vert \partial\Gamma\vert \geq |\hat{X}| -
\vert\partial i\vert - \vert\partial j\vert$ and since $\vert\partial \Gamma\vert \leq \vert \Gamma\vert l_{\max}$, we get $\vert \Gamma\vert \geq 
(|\hat{X}| -2 l_{\max})/l_{\max}$. Also, the maximum
number of $a$-nodes which have an intersection with $\hat{X}$ is
$|\hat{X}|k_{\max}$. Thus there are at most $2^{|\hat{X}|k_{\max}}$ possible choices for
$\Gamma$. These remarks imply
\begin{equation}\label{K2bound}
\bE_{\noise}\bigl[|K_{i,j}(\hat{X})|^{2s}\bigr]  \le  2^{(2+k_{\max})|\hat{X}|} \Delta(\epsilon)^{(|\hat{X}| -2 l_{\max})/l_{\max}}
\end{equation}

From \eqref{final} and \eqref{K2bound} we get 
\begin{equation}
\mathbb{E}_{l^n}[\vert \langle \tau_i\tau_j \rangle_{\perp} - \langle \tau_i \rangle_{\perp}\langle \tau_j \rangle_{\perp}\vert^{2s}]\leq \frac{1}{2^{2s}}\sum_{\hat{X}}  2^{(2+k_{\max})|\hat{X}|} \Delta(\epsilon)^{(|\hat{X}| -2 \dlm)/2\dlm}
\end{equation}
The clusters $\hat{X}$ connect $\partial i$ and $\partial j$ and thus have sizes 
$\vert\hat{X}\vert\geq \frac{1}{2}\text{dist}(i,j)$. Moreover the number of clusters of a given size grows at most like $K^{\vert \hat{X}\vert}$ where $K=l_{\max}k_{max}$. Since for the class $\mathcal{K}$ we have for $s$ small enough, $\mathbb{E}[e^{sl}]\to 0$ as $\epsilon\to 0$ we can always chose $\epsilon$ small enough to make $\Delta(\epsilon)$ small enough and conclude the proof.
\end{proof}

\vskip 0.25cm

%In fact we will derive (and use in section \ref{ }) a slightly more general estimate. Suppose that the bits $x_i$ are transmitted at different noise levels 
%$\epsilon_i\leq \epsilon$. Then the correlation in \eqref{firstdecay} is bounded by
%\begin{equation}\label{general}
%\mathbb{E}[(\sinh 2l_i)^{-2s}]
%\mathbb{E}[(\sinh 2l_j)^{-2s}]
%(K \mathbb{E}[e^{-sl}])^{\text{dist}(i,j)}
%\end{equation} 
%In particular if bits $x_i$ or $x_j$ are perfectly received ($\epsilon_i=\epsilon_j=0$) the prefactors vanish (see for example \eqref{}, \eqref{}).
%Working out the final bounds, and putting them in a symmetrical form, the net result is that for $\text{dist}(i,j)> 4\dlm$ we can find a purely numerical constant $\epsilon_0$ such that for $\epsilon^2<\epsilon_0^2 k^{-2}(\ln k)^{-1}$ 
%\begin{equation}\nonumber
%C_G(i,j;s=\frac{1}{16})\leq 
%c_1 e^{-\frac{c_2}{\epsilon^2 k}\text{dist}(i,j)}
%\end{equation}
%where $k= (\dlm\drm)^{\frac{1}{2}}$ and $c_1$ and $c_2$ a strictly positive numbers.
%Using this bound with \eqref{spower} and \eqref{sinhbound}  concludes the proof of \eqref{firstdecay} and \eqref{general}.
%\end{proof}

\subsection{Density evolution equals MAP for low noise}

In the case of LDPC codes the functional giving the MAP-GEXIT function in \eqref{gexit} is \cite{Macris}
\begin{align}\label{eq:GEXIT-ldpc}
\mathcal{G}(\langle x_i\rangle_0) =
= \int dl_i \frac{dc(l_i)}{d\epsilon} \bE_{l^{n\setminus i}}\ln\biggl\{  \frac{1+\langle x_i\rangle_0\tanh l_i}{1+\tanh l_i}\biggr\} 
\end{align}
Note that the only formal difference with the LDGM case is in the normalization factor; but of course now the Gibbs average pertains to the LDPC measure.
The BP-GEXIT curve is given by the same functionals with $\langle x_i\rangle_0$ replaced by the average on the computational tree $\langle x_i\rangle_{0, d}^{BP}$. As in section \ref{section-3} we introduce 
$N_d(i)$ the neigborhood of node $i$, radius $d$ an even integer. By the same arguments than in section \ref{section-3} we have again
\begin{equation}
\lim_{d\to+\infty}\lim_{n\to+\infty}g_{n,d}^{BP}(\epsilon) = 
\mathbb{E}_C[\mathcal{G}(\langle x_i\rangle_{0, N_d(i)})\vert N_d(i)\,\text{is a tree}]
\end{equation}
where $\langle x_i\rangle_{0, N_d(i)}$ is the Gibbs bracket associated to the graph $N_d(i)$. It is important to note that for $N_d(i)$ a tree the set of leaves $\mathring{N}_d(i)$ are variable nodes and have ``natural boundary conditions'' as given by the channel outputs. The statistical mechanical sums on a tree yield the DE formula
\begin{align}\label{DE-gexit}
\lim_{d\to+\infty}\lim_{n\to+\infty}g_{n,d}^{BP}(\epsilon)=\lim_{d\to \infty}\int dl \frac{dc(l)}{d\epsilon} \bE_{\Lambda^{(d)}}\ln\bigg\{\frac{1+\tanh\Lambda^{(d)}\tanh l}{1+\tanh l}\bigg\} 
\end{align}
where both limits exist and
\begin{equation}
\Lambda^{(d)} = \sum_{a=1}^k w_a^{(d)} 
\end{equation}
The $w_a^{(d)}$ are i.i.d random variables with distribution obtained from the iterative system of DE equations
\begin{align}
\zeta^{(d)}(w) & =\sum_{l}\frac{l\Lambda_l}{\Lambda^\prime(1)} \int\prod_{j=1}^{l-1}d\lambda_j\,\zeta^{(d)}(\lambda_j)\,\delta(w- \tanh^{-1}\bigl(\prod_{j=1}^{l-1}\tanh \lambda_j\bigr)) \nonumber \\
\widehat\zeta^{(d)}(\lambda)&=\sum_{k}\frac{kP_k}{P^\prime(1)} \int dl\,c(l)\prod_{a=1}^{k-1}dw_a\,\zeta^{(d-1)}(w_a)\, 
\delta(\lambda-l -\sum_{a=1}^{k-1} w_a^{(d)}) \nonumber
\end{align} 
with the initial condition $\eta^{(0)}(\lambda)=c(\lambda)$. As before, these equations are an iterative version of the replica fixed point equation \cite{Saad}. 
\begin{proof}[Proof of corollary 1, LDPC]
The first few steps are the same as in the proof for LDGM. First, we expand the logarithm in \eqref{eq:GEXIT-ldpc} and use Nishimori identities to obtain  a series expansion like \eqref{gexitEXP} (the prefactor $\frac{\Lambda'(1)}{P'(1)}$ is now absent). Second, we notice that since the resulting  series expansion is uniformly absolutely convergent it is enough to show that
\begin{equation}\label{newright}
 \lim_{n\to+\infty}\bE_{\code,l^{n\setminus i}}[\langle x_i\rangle^{2p}_0]
=\lim_{d\to+\infty}\bE_d[(\tanh\Lambda^{(d)})^{2p}]
\end{equation}
Thirdly, as before, one argues that this follows from
\begin{equation}\label{ssss}
\lim_{d\to+\infty}\lim_{n\to+\infty}\mathbb{E}_{C,l^{n\setminus i}}[\vert\langle x_i\rangle_0^{2p} -
\langle x_i\rangle_{0, N_d(i)}^{2p}\vert \mid N_d(i)\,\text{tree}] =0
\end{equation}
and because of $\vert b^{2p}- a^{2p}\vert\leq 2p \vert b-a\vert$ it is enough to show this for $2p$ replaced by $1$. Unfortunately one cannot proceed as simply as in the LDGM case: \eqref{ssss} is a consequence of the next two auxiliary lemmas stated below.
\end{proof}

Let $\langle - \rangle_{0; N_d(i)}^\infty$ be the bracket defined on the subgraph
$N_d(i)$ with $l_k=+\infty$ for $k\in \partial N_d(i)$. This in fact is formaly equivalent to fixing $x_k=+1$ boundary conditions on the leaves of the tree $k\in \mathring{N}_d(i)$.
The first lemma says that the bit estimate can be computed locally.
\begin{lemma}\label{first-lemma}
Under the same conditions than in corollary \ref{corollary-1},
\begin{equation}\label{72}
\lim_{d\to+\infty}\lim_{n\to+\infty}\mathbb{E}_{C,l^{n\setminus i}}[\vert \langle x_i\rangle_0 -
\langle x_i\rangle_{0, N_d(i)}^\infty\vert \mid N_d(i)\,\text{tree}] =0
\end{equation}
\end{lemma}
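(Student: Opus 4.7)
The key reduction is that $\langle x_i\rangle_{0,N_d(i)}^\infty$ equals the full-graph LDPC bracket obtained from \eqref{gibbs-ldpc} by setting $l_i=0$ and $l_k=+\infty$ for every $k\in\mathring{N}_d(i)$. Indeed, freezing $x_k=+1$ on the boundary variable nodes factorizes the conditional measure: parity checks internal to $N_d(i)$ constrain only $N_d(i)$-variables, checks straddling the boundary reduce after fixing to constraints purely on $N_d(i)^c$-variables, and the marginal on $N_d(i)$ is exactly the tree measure $\langle\cdot\rangle_{0,N_d(i)}^\infty$. So \eqref{72} reduces to controlling the change in the full-graph bit estimate at $i$ when the boundary log-likelihoods are replaced by $+\infty$.

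Imitating the LDGM corollary proof, I would enumerate $\mathring{N}_d(i)=\{k_1,\ldots,k_M\}$ with $M\le K^d$ and telescope by sending $l_{k_s}\to+\infty$ one site at a time. Using $e^{l_kx_k}\propto 1+x_k\tanh l_k$, a direct calculation yields
\begin{equation*}
\langle x_i\rangle_{\le s}-\langle x_i\rangle_{\le s-1} = \frac{(\langle x_ix_{k_s}\rangle_\nu-\langle x_i\rangle_\nu\langle x_{k_s}\rangle_\nu)(1-\tanh l_{k_s})}{(1+\langle x_{k_s}\rangle_\nu)(1+\langle x_{k_s}\rangle_\nu\tanh l_{k_s})},
\end{equation*}
where $\nu$ is the intermediate bracket with $l_{k_1},\ldots,l_{k_{s-1}}=+\infty$ and $l_{k_s}=0$. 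The bracket $\nu$ is independent of $l_{k_s}$, and since $\nu$ is still a low-noise LDPC measure with the same degree bounds (freezing variables to $+1$ only eliminates them from the Berretti cluster expansion of Appendix \ref{Appendix-A}), Theorem \ref{theorem-corr} applies and gives exponential decay of the expected covariance moment in the numerator at a rate like $e^{-d/\xi(\epsilon)}$.

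The main technical obstacle is taming the amplification factor, whose denominators $(1+\langle x_{k_s}\rangle_\nu)$ and $(1+\langle x_{k_s}\rangle_\nu\tanh l_{k_s})$ can each vanish. Two observations are needed. First, the ratio $(\langle x_ix_{k_s}\rangle_\nu-\langle x_i\rangle_\nu\langle x_{k_s}\rangle_\nu)/(1+\langle x_{k_s}\rangle_\nu)$ is automatically in $[-2,2]$ since it equals the difference of two $[-1,1]$-valued bit estimates (at $l_{k_s}=+\infty$ and $l_{k_s}=0$), so the first singularity is cancelled algebraically by the numerator. Second, for the remaining factor I would apply the $s$-power trick from the proof of Theorem \ref{theorem-corr} for LDPC: $|\Delta|\le 2^{1-s}|\Delta|^s$ for $0<s<1/2$ and $|\Delta|\le 2$, combined with a H\"older splitting separating $l_{k_s}$ (integrated against $c(l)$) from the $\nu$-dependent quantities. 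The low-noise assumption then enters via condition $3$ of $\mathcal{K}$, which makes $\mathbb{E}[(1-\tanh l_{k_s})^{2s}]\to 0$ as $\epsilon\to 0$ for $s$ small enough; this is the LDPC analogue of the bound $\mathbb{E}[t]<\infty$ used in the LDGM proof. Putting everything together, each telescoping term is bounded by $\eta(\epsilon)\cdot e^{-cd}$ with $\eta(\epsilon)\to 0$ in low noise, and summing over the $M\le K^d$ boundary nodes yields $K^d\eta(\epsilon)e^{-cd}$, which vanishes as $d\to\infty$ once the correlation decay rate beats the graph expansion — precisely the regime identified by Theorem \ref{theorem-corr}. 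The delicate point I have not worked out in full is the precise moment estimate on $(1+\langle x_{k_s}\rangle_\nu\tanh l_{k_s})^{-s}$, which likely requires Nishimori-line moment identities to ensure integrability in the joint distribution of $l_{k_s}$ and the $\nu$-bracket.
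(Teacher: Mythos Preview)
Your overall plan --- reduce to a telescoping sum over boundary nodes, bound each increment by a covariance, and let the correlation decay from Theorem~\ref{theorem-corr} beat the $K^d$ boundary growth --- is the same as the paper's. The specific device you use to implement the telescoping, however, is different from the paper's and creates the very obstruction you flag at the end.

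You compare $l_{k_s}=+\infty$ to the true $l_{k_s}$ \emph{through} the intermediate bracket $\nu$ at $l_{k_s}=0$, which produces the two denominators $(1+\langle x_{k_s}\rangle_\nu)$ and $(1+\langle x_{k_s}\rangle_\nu\tanh l_{k_s})$. Your observation that the ratio $\mathrm{cov}_\nu/(1+\langle x_{k_s}\rangle_\nu)$ is bounded by $2$ is correct, but once you use it you have \emph{absorbed} the covariance and can no longer extract the $e^{-cd}$ decay from it; conversely, if you keep the covariance factor separate (to get decay in $d$) you must control moments of $(1+\langle x_{k_s}\rangle_\nu)^{-s}$ and $(1+\langle x_{k_s}\rangle_\nu\tanh l_{k_s})^{-s}$, and these brackets depend on the whole noise realisation in a complicated way --- no Nishimori identity will give you this cheaply. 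So the ``delicate point'' you leave open is not a detail but the actual gap.

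The paper avoids this entirely by two changes. First, it removes the awkward $l_i=0$ at the root via the algebraic identity relating $\langle x_i\rangle_0$ to $\langle x_i\rangle$, so one can work with the full (non-extrinsic) bracket. Second, and this is the key point, instead of comparing through $l_{k_s}=0$ it uses the fundamental theorem of calculus,
\[
\langle x_i\rangle - \langle x_i\rangle_{N_d(i)}^\infty
= -\sum_{k\in \mathring N_d(i)}\int_{l_k}^{+\infty} dl_k'\,\bigl(\langle x_ix_k\rangle_{\le k-1}^\infty-\langle x_i\rangle_{\le k-1}^\infty\langle x_k\rangle_{\le k-1}^\infty\bigr),
\]
which produces the bare covariance with \emph{no denominators}. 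One then applies the $s$-power trick and duality \eqref{secondderivative}; the $(\sinh 2l_k')^{-2s}$ that appears is exactly what makes the $\int^{+\infty}dl_k'$ converge, and the remaining dual covariance is bounded by the Berretti expansion as in the proof of Theorem~\ref{theorem-corr}. Replacing your discrete comparison by this integral representation is the missing idea.
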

The second lemma says that at low enough noise free and $+1$ boundary conditions are equivalent
\begin{lemma}\label{second-lemma}
Under the same conditions than in corollary \ref{corollary-1},
\begin{equation}\label{73}
\lim_{d\to+\infty}\lim_{n\to+\infty}\mathbb{E}_{C,l^{n\setminus i}}[\langle x_i\rangle_{0, N_d(i)}^\infty - \langle x_i\rangle_{0, N_d(i)}
\vert \mid N_d(i)\,\text{tree}] =0
\end{equation}
\end{lemma}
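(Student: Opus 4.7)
The plan rests on the deterministic observation that $\langle x_i\rangle_{0, N_d(i)}^\infty=1$ whenever $N_d(i)$ is a tree. Indeed, with $x_k=+1$ fixed at every leaf $k\in\mathring{N}_d(i)$, the parity check at any depth-$(d-1)$ node $c$ reads $x_{\mathrm{parent}}\cdot\prod_{\text{children leaves}}x_k=x_{\mathrm{parent}}=1$, forcing its unique parent variable at depth $d-2$ to be $+1$; iterating level by level from the leaves up to the root forces every variable, including $x_i$, to $+1$ on the support of the posterior. Hence the statement reduces to
\begin{equation}
\lim_{d\to\infty}\lim_{n\to\infty}\mathbb{E}_{C,l^{n\setminus i}}\bigl[1-\langle x_i\rangle_{0, N_d(i)}\,\bigl|\,N_d(i)\text{ tree}\bigr]=0.
\end{equation}
On the tree $\langle x_i\rangle_{0,N_d(i)}=\tanh\Lambda^{(d)}$, and in the $n\to\infty$ limit conditioned on the tree event, $\Lambda^{(d)}$ has the distribution produced by the DE recursion of this section with natural initial condition $\zeta^{(0)}(\lambda)=c(\lambda)$. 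The goal is thus $\lim_d\mathbb{E}[1-\tanh\Lambda^{(d)}]=0$.

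\textbf{DE contraction.} I would track an exponential moment such as $m^{(d)}=\mathbb{E}[e^{-sv^{(d)}}]$ for some small $s\in(0,s_0)$, so that condition 3 of $\mathcal{K}$ supplies $\mathbb{E}[e^{-sl}]\to 0$ as $\epsilon\to 0$. The elementary inequality $1-\tanh v\le 2e^{-sv}$, valid for all $v\in\mathbb{R}$ and $s\in[0,2]$ (the inequality $2/(1+e^{2v})\le 2e^{-sv}$ is equivalent to $e^{-sv}+e^{(2-s)v}\ge 1$, which follows from AM--GM for $s\le 2$), converts control of $\mathbb{E}[e^{-s\Lambda^{(d)}}]$ into control of $\mathbb{E}[1-\tanh\Lambda^{(d)}]$. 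On the tree, independence of the incoming check messages at a variable of degree $l$ yields $m^{(d)}=\mathbb{E}[e^{-sl}]\cdot(\mu^{(d)})^{l-1}$ with $\mu^{(d)}=\mathbb{E}[e^{-su^{(d)}}]$. At a check of degree $k$, the identity $e^{-2u}=(1-\prod_j\tanh v_j)/(1+\prod_j\tanh v_j)$, the subadditivity $|1-\prod_j y_j|^{s/2}\le\sum_j|1-y_j|^{s/2}$ (which only uses $|y_j|\le 1$ and $s/2<1$), and $(1-\tanh v)^{s/2}\le 2^{s/2}e^{-sv}$ together give a bound $\mu^{(d)}\le C(k_{\max},s)\,m^{(d-1)}$. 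Combining,
\begin{equation}
m^{(d)}\le\mathbb{E}[e^{-sl}]\cdot\bigl(C(k_{\max},s)\,m^{(d-1)}\bigr)^{l_{\min}-1},
\end{equation}
a strict contraction for nontrivial ensembles ($l_{\min}\ge 2$) once $\mathbb{E}[e^{-sl}]$ is sufficiently small. Hence there is $\epsilon_p'>0$ below which iteration from $m^{(0)}=\mathbb{E}[e^{-sl}]$ drives $m^{(d)}\to 0$, and the same step at the root gives $\mathbb{E}[e^{-s\Lambda^{(d)}}]\to 0$.

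\textbf{Main obstacle.} The principal subtlety is that the check-node denominator $1+\prod_j\tanh v_j$ can be close to zero when an odd number of the $v_j$ are large and negative, which threatens to blow up the naive bound on $e^{-su}$; the argument must be arranged so that these bad events are automatically dominated. A clean route is to carry out the analysis with the sign-insensitive observable $\mathbb{E}[(1-\tanh v)^{s/2}]$ throughout (the check-node step being the subadditivity above, and the variable-node step using $1-\tanh(a+b)=(1-\tanh a)(1-\tanh b)/(1+\tanh a\tanh b)$ together with a small-noise control on the denominator), and then convert back via $(1-\tanh v)^{s/2}\le 2^{s/2}e^{-sv}$. The final bookkeeping issue is to choose $\epsilon_p'$ no larger than the threshold from Theorem \ref{theorem-corr}, so that both lemmas and the correlation decay of Section \ref{section-4} hold simultaneously in a common low-noise regime.
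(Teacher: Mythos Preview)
Your route differs from the paper's. The paper does not use $\langle x_i\rangle_{0,N_d(i)}^\infty=1$ at all; it simply reruns the proof of Lemma~\ref{first-lemma} with the full graph replaced by the tree $N_d(i)$: interpolate between the $+\infty$ and the natural boundary conditions on $\mathring N_d(i)$, express the resulting difference as an integral of correlations $\langle x_ix_k\rangle-\langle x_i\rangle\langle x_k\rangle$ for boundary leaves $k$, pass to the dual via \eqref{secondderivative}, and bound the dual correlations by the Berretti expansion. Since the root-to-boundary distance is $d$, for small enough $\epsilon$ the exponential decay beats the boundary growth $K^d$.

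Your reduction is clean but carries an assumption you flag only for the contraction and not for the reduction itself: it needs $l_{\min}\ge 2$. The level-by-level forcing breaks at any degree-one variable node: such a node at depth $d-2$ has no child check inside $N_d(i)$ and is therefore not pinned by the $+\infty$ boundary, which in turn leaves its parent check at depth $d-3$ (and everything above it along that branch) unforced. For ensembles containing degree-one variable nodes one has $\langle x_i\rangle_{0,N_d(i)}^\infty\neq 1$ in general, so the very first step fails there. The paper's interpolation argument has no such restriction.

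As for the contraction, the obstacle you identify is genuine, and the alternative via $(1-\tanh v)^{s/2}$ merely relocates the small-denominator problem from the check step to the variable step (the factor $(1+\tanh a\tanh b)^{-s/2}$ can blow up when $a$ and $b$ have opposite signs and large magnitude). That said, the target $\mathbb{E}[1-\tanh\Lambda^{(d)}]\to 0$ is precisely the statement that BP decoding succeeds at low noise, which for $l_{\min}\ge 2$ is a standard consequence of DE having a positive threshold; so this part can be closed by citing or reproducing that result rather than by the specific moment scheme you sketch. What your approach buys, where it applies, is a transparent link between the lemma and BP convergence; what the paper's approach buys is full generality in the degree profile and essentially no new work beyond Lemma~\ref{first-lemma}.
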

We prove the first lemma. It will then be clear that the proof of the second one is essentialy the same except that the original full graph is replaced by $N_d(i)$, and thus it will be spared. 
\begin{proof}[Proof of lemma \ref{first-lemma}]
In \eqref{72} (and \eqref{73}) the root node $i$ has $l_i=0$ which turns out to be technically cumbersome because we really work in a low noise regime. For this reason we use
\begin{equation}
\langle x_i\rangle=\frac{\langle x_i\rangle_0 + \tanh l_i}{1+\langle x_i\rangle_0\tanh l_i},\qquad \langle x_i\rangle_{N_d(i)}^{\infty}=\frac{\langle x_i\rangle_{0,N_d(i)}^\infty + \tanh l_i}{1+\langle x_i\rangle_{0,N_d(i)}^\infty\tanh l_i}
\end{equation}
to deduce
\begin{equation}
\langle x_i\rangle_0 - \langle x_i\rangle_{0, N_d(i)}^\infty
= \frac{(1-(\tanh l_i)^2)(\langle x_i\rangle - \langle x_i\rangle_{N_d(i)}^\infty)}{(1-\langle x_i\rangle\tanh l_i)(1-\langle x_i\rangle_{N_d(i)}^\infty\tanh l_i)}
\end{equation}
This implies 
\begin{equation}
\vert\langle x_i\rangle_0 - \langle x_i\rangle_{0;N_d(i)}^\infty\vert
\leq \frac{1+\vert\tanh l_i\vert}{1-\vert \tanh l_i\vert}\,\vert\langle x_i\rangle - \langle x_i\rangle_{N_d(i)}^\infty\vert
\end{equation}
and averaging over the noise and using Cauchy-Schwarz,
\begin{align}
\mathbb{E}_{l^n\setminus i}[\vert\langle x_i\rangle_0 - \langle x_i\rangle_{0;N_d(i)}^\infty\vert] &
\leq 
2\mathbb{E}[e^{4\vert l\vert}]^{1/2}\mathbb{E}_{l^n}[\vert\langle x_i\rangle - \langle x_i\rangle_{N_d(i)}^\infty\vert^2]^{1/2}
\cr &
\leq 
2\sqrt{2}\,\mathbb{E}[e^{8\vert l\vert}]^{1/2}\,\mathbb{E}_{l^n}[\vert\langle x_i\rangle - \langle x_i\rangle_{N_d(i)}^\infty\vert]^{1/2}
\end{align}
Let us now prove
\begin{equation}
\lim_{d\to+\infty}\lim_{n\to+\infty}\mathbb{E}_{C,l^{n}}[\vert\langle x_i\rangle
- \langle x_i\rangle_{N_d(i)}^\infty\vert 
\mid N_d(i)\,\text{tree}]  = 0
\end{equation}
We order the variable nodes at the boundary $\mathring{N}_d(i)$ and consider the corresponding vector of loglikelihoods with components $\in \mathring{N}_d(i)$. If the first $k-1$ components of this vector are $l_1,...,l_{k-1}=+\infty$, the $k$-th component is $l_k^\prime$, and the other ones are i.i.d distributed as $c(l)$ (in other words they are ``natural``) we write  $\langle -\rangle_{\leq k-1}^{\infty}$. From the fundamental theorem of calculus, it is not difficult to see that
\begin{align}
\langle x_i\rangle - & \langle x_i\rangle_{N_d(i)}^\infty = - \sum_{k\in \mathring{N}_d(i)}
\int_{l_k}^{+\infty} dl_k^\prime\, \frac{d}{d l_k^\prime}\langle x_i\rangle_{\leq k-1}^{\infty}
\cr &
= -
\sum_{k\in \mathring{N}_d(i)}
\int_{l_k}^{+\infty} dl_k^\prime\, (\langle x_ix_k\rangle_{\leq k-1}^{\infty}
-\langle x_i\rangle_{\leq k-1}^{\infty}\langle x_k\rangle_{\leq k-1}^{\infty})
\end{align}
Using $\vert a\vert\leq \vert a\vert^s$ for any $0<s<1$ and $\vert a\vert \leq 1$ we get
\begin{align}
\vert\langle x_i\rangle - & \langle x_i\rangle_{N_d(i)}^\infty\vert 
\cr &
\leq 
2^{1-s}\sum_{k\in \mathring{N}_d(i)}
\int_{l_k}^{+\infty} dl_k^\prime\, \vert\langle x_ix_k\rangle_{\leq k-1}^{\infty}
-\langle x_i\rangle_{\leq k-1}^{\infty}\langle x_k\rangle_{\leq k-1}^{\infty}\vert^s
\end{align}
Let $\langle -\rangle_{\leq k-1}^{\infty,\perp}$ be the dual bracket (with the first $k$ components of $\mathring{N}_d(i)$
$l_1=...=l_{k-1}=+\infty$ and the $k$-th component equal to $l_k^\prime$). Because of \eqref{secondderivative} we have
\begin{align}
\vert\langle x_i\rangle - & \langle x_i\rangle_{N_d(i)}^\infty\vert 
\cr &
\leq 
2^{1-s}\sum_{k\in N_d(i)}
\int_{l_k}^{+\infty} dl_k^\prime\, \frac{\vert\langle \tau_i\tau_k\rangle_{\leq k-1}^{\infty,\perp}
-\langle \tau_i\rangle_{\leq k-1}^{\infty,\perp}\langle \tau_k\rangle_{\leq k-1}^{\infty,\perp}\vert^s}{(\sinh 2l_i\sinh 2l_k^\prime)^{2s}}
\end{align}
Note that the denominator in the integral is important to make the integral convergent for $l_k^\prime\to \infty$. Moreover at $l_i$ and $l_k^\prime=0$ is harmless as long as for $2s<1$. 
The next step is to use the cluster expansion in order to estimate
\begin{align}
\mathbb{E}_{l^n}\biggl[\int_{l_k}^{+\infty} dl_k^\prime\,\frac{\vert\langle \tau_i\tau_k\rangle_{\leq k-1}^{\infty,\perp}
-\langle \tau_i\rangle_{\leq k-1}^{\infty,\perp}\langle \tau_k\rangle_{\leq k-1}^{\infty,\perp}\vert^s}{(\sinh 2l_i\sinh 2l_k^\prime)^{2s}}\biggr]
\end{align}
By following similar steps than in the proof of theorem \ref{theorem-corr} one obtains an upper bound similar to \eqref{K2bound} except that the likelihoods of the end points are weighted differently and therefore there are two factors of $\Delta(\epsilon)$ (see \eqref{Delta}) replaced by 
\begin{equation}
\mathbb{E}\biggl[\frac{2^{2s}e^{-4sl} + e^{-8sl}}{(\sinh 2l)^{-2s}}\biggr] <\infty
\qquad{\rm and}\qquad
\mathbb{E}\biggl[\int_{l}^{+\infty} dl^\prime\, \frac{2^{2s}e^{-4sl^\prime} + e^{-8sl^\prime}}{(\sinh 2
l^\prime)^{-2s}}\biggr]<\infty
\end{equation}
Finally we can average over the code ensemble conditional on the event
that $N_d(i)$ is a tree. Since the clusters $\hat X$ that connect $\partial i$ and $\partial k$, $k\in N_d(i)$ have size $\vert \hat X\vert k_{\max}$ we obtain the result as long as $\Delta(\epsilon)$ is small enough, for $\epsilon$ small enough.
\end{proof}

\section{Large block length versus large number of iterations}\label{section-5}

In the LDGM case we prove the exchange of limits $d,n\to+\infty$ for the BSC channel. As will become clear one needs the decay of correlations (or covariance) of the Gibbs measure on the computational tree for $d\gg n$. Hence the likelihoods are not independent r.v: the proof of theorem \ref{theorem-corr} still goes through in the case
of the BSC. The only difference is that in lemma \ref{lem:corrbound} we can take $H> \frac{1}{2}\ln\frac{1-\epsilon}{\epsilon}$ such that $\bad=\emptyset$ and $\rho_{\pi(j)}= e^{4\vert l_{\pi(j)}\vert} - 1 = \frac{4\vert 1-2\epsilon\vert}{(1-\vert 1-2\epsilon\vert)^2}$ for all $j\in T_d(i)$. 
\begin{lemma}[Decay of correlations for the BP decoder, LDGM on BSC]\label{lemma-BP}
Consider communication with a fixed general LDGM code with blocklength size $n$ and bounded degrees $l_{\max}$, $k_{\max}$, over the BSC($\epsilon$). We can find $c>0$, a small enough numerical constant such that for $l_{\max} k_{\max}\vert 1- 2\epsilon\vert <c$ we have, for any given realization of the channel outputs, 
\begin{equation}
\vert \langle x_i x_j\rangle_{d}^{BP} - \langle x_i \rangle_{d}^{BP} \langle x_j\rangle_{d}^{BP}\vert
\leq c_1e^{-c_2(\epsilon){\rm dist}(i,j)}
\end{equation}
where $i$ is the root of the computational tree, $j$ and arbitrary node, 
 $c_1>0$ a numerical constant and $c_2(\epsilon)>0$ depending only on $\epsilon$, $l_{\max}$, $k_{\max}$. Moreover $c_2(\epsilon)$ increases like $\ln\vert 1- 2\epsilon\vert$ as $\epsilon\to \frac{1}{2}$.
\end{lemma}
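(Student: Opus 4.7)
My plan is to apply Lemma \ref{lem:corrbound} to the Gibbs measure \eqref{BP-ldgm} on the computational tree $T_d(i)$, treated as an LDGM code with the same bounded degrees $l_{\max},k_{\max}$ as the base graph. The key observation I would stress at the outset is that Lemma \ref{lem:corrbound} is a pathwise (deterministic) inequality valid for \emph{any} configuration of channel loglikelihoods: the derivation in Appendix \ref{Appendix-A} does not invoke independence of the $l_i$'s, this was only used afterwards in \eqref{eq:avgcorrkp} when taking the noise average. Consequently the dependence that the covering-tree projection $\pi$ induces between the loglikelihoods $l_{\pi(k)}$ on $T_d(i)$ is immaterial, and for every realization of the channel output we obtain
\begin{equation*}
\bigl|\langle x_i x_j\rangle_d^{BP}-\langle x_i\rangle_d^{BP}\langle x_j\rangle_d^{BP}\bigr| \le 2\sum_{w\in W_{\partial i,\partial j}}\prod_{k\in w}\rho_{\pi(k)},
\end{equation*}
where $W_{\partial i,\partial j}$ denotes the set of self-avoiding walks on $T_d(i)$ from $\partial i$ to $\partial j$.

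I would then exploit the BSC specifically: because $|l|=\tfrac12\ln\tfrac{1-\epsilon}{\epsilon}$ is a deterministic constant, choosing $H$ slightly above this value empties the bad set, $\bad=\emptyset$, and renders $\rho_{\pi(k)}=e^{4|l|}-1=\tfrac{4|1-2\epsilon|}{(1-|1-2\epsilon|)^2}=:\rho(\epsilon)$ a constant independent of $k$. The remaining work is the same geometric-sum estimate that led to \eqref{good-bound}: on the bipartite tree $T_d(i)$ any self-avoiding walk from $\partial i$ to $\partial j$ has length at least $\text{dist}(i,j)/2$ in variable nodes, and the number of self-avoiding walks of length $m$ out of a fixed variable node is bounded by $K^m$ with $K=l_{\max}k_{\max}$. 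Summing the geometric series, under the condition $K\rho(\epsilon)<1$,
\begin{equation*}
\sum_{w\in W_{\partial i,\partial j}}\rho(\epsilon)^{|w|-1} \le \frac{k_{\max}^2}{\rho(\epsilon)\bigl(1-K\rho(\epsilon)\bigr)}\bigl(K\rho(\epsilon)\bigr)^{\text{dist}(i,j)/2}.
\end{equation*}
Since $\rho(\epsilon)\sim 4|1-2\epsilon|$ as $\epsilon\to\tfrac12$, the threshold $K\rho(\epsilon)<1$ is implied by $l_{\max}k_{\max}|1-2\epsilon|<c$ for a small enough numerical constant $c$, giving the claimed exponential decay with rate $c_2(\epsilon)=\tfrac12\ln\bigl(K\rho(\epsilon)\bigr)^{-1}$, which grows like $\tfrac12\ln|1-2\epsilon|^{-1}$ as $\epsilon\to\tfrac12$.

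I do not anticipate any serious obstacle here: the proof is essentially Theorem \ref{theorem-corr} reread pathwise. The one subtle point to verify carefully is that Lemma \ref{lem:corrbound}'s cluster expansion in the appendix is derived without ever assuming the loglikelihoods are independent random variables, so the repetition pattern $l_{\pi(k)}=l_{\pi(k')}$ whenever $\pi(k)=\pi(k')$ does not spoil the bound. I would emphasise that this argument crucially uses that the BSC has bounded likelihood support: on channels like the BIAWGNC one cannot make $\bad=\emptyset$ by a deterministic choice of $H$, and since the $l_{\pi(k)}$ on the covering tree are not independent, one cannot recover the unbounded case by the noise-averaging trick used in Theorem \ref{theorem-corr}; handling such channels (or LDPC at low noise, where the dual bracket is even less tractable) would require a genuinely new ingredient and is deferred.
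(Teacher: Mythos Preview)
Your proposal is correct and follows essentially the same approach as the paper: the paper does not give a separate formal proof of this lemma, but remarks that the proof of Theorem~\ref{theorem-corr} goes through on $T_d(i)$ once one takes $H>\tfrac12\ln\frac{1-\epsilon}{\epsilon}$ so that $\bad=\emptyset$ and $\rho_{\pi(k)}=e^{4|l|}-1=\tfrac{4|1-2\epsilon|}{(1-|1-2\epsilon|)^2}$ is a deterministic constant, exactly as you argue. Your emphasis that Lemma~\ref{lem:corrbound} is a pathwise bound (independence of the $l_i$'s is used only in the subsequent averaging step \eqref{eq:avgcorrkp}) is precisely the point the paper is making when it notes that the result holds for each fixed noise realization, and your closing remarks about why the argument does not extend to unbounded-support channels mirror the paper's own caveats.
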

Basically, this result is contained in \cite{Tatikonda} where it is obtained by Dobrushin's criterion. Note that it is valid for fixed noise realizations and not only on average. The unbounded case would require to take averages but then, on the computational tree one has to control moments $\mathbb{E}[\rho_\pi(i)^m]$ and this requires more work. The following proof is a simple application of this lemma.
\begin{proof}[Proof of theorem 2, LDGM, BSC]
We take for the number of iterations of the BP decoder $d\gg n$. On the computational tree $T_d(i)$ we consider the subtree of root $i$ and depth $d^\prime\ll n$. This subtree is a smaller computational tree
$T_{d^\prime}(i)\subset T_{d}(i)$ and $d^\prime\ll n\ll d$. Let 
$\mathring{T}_{d^\prime}(i)$ the leaves $k$ with ${\rm dist}(i,k)=d^\prime$ and order them in an arbitrary way. Consider the Gibbs measure $\langle - \rangle_{d; \leq k}^{BP}$ where for the first $k$ checks of $\mathring{T}_{d^\prime}(i)$ we set $l_{\pi(k)}=0$ in \eqref{BP-ldgm}. Proceeding as in section \ref{section-3} we get
\begin{equation}
\vert \langle x_i\rangle_{d}^{BP} - \langle x_i\rangle_{d^\prime}^{BP}\vert
\leq 
\sum_{k\in \mathring{T}_{d^\prime}(i)} t_{\pi(k)} \vert \langle x_i x_k\rangle_{d; \leq k}^{BP}
- \langle x_i \rangle_{d; \leq k}^{BP} \langle x_k\rangle_{d; \leq k}^{BP} \vert
\end{equation}
For the BSC, $t_{\pi(k)} = \frac{\vert 1-2\epsilon\vert}{1 - \vert 1- 2\epsilon \vert}$. From 
lemma \ref{lemma-BP} for $\vert 1-2\epsilon \vert$ small enough (but independent of $n,d$)
\begin{equation}
\vert \langle x_i\rangle_{d}^{BP} - \langle x_i\rangle_{d^\prime}^{BP}\vert = O( K^{d^\prime} e^{-c_2(\epsilon) d^\prime})
\end{equation}
In this equation $O(-)$ is uniformly bounded with respect to $n$ and $d$ (and the noise realizations of course). Recall the GEXIT function of the BP decoder
\begin{equation}
g_{n,d}(\epsilon) = \frac{\Lambda'(1)}{P'(1)}  \int dl_i \frac{dc(l_i)}{d\epsilon} \bE_{\code, l^{n\setminus i}}\ln\biggl\{  \frac{1+\langle x_i\rangle_{0,d}^{BP}\tanh l_i}{1+\tanh l_i}\biggr\} 
\end{equation}
Since for $\vert 1-2\epsilon \vert \ll 1$, $\vert\tanh l_i \vert= \frac{1}{2}\vert \ln \frac{1-\epsilon}{\epsilon}\vert\ll 1$, one can easily show
\begin{equation}\label{88}
g_{n,d}(\epsilon) = g_{n,d^\prime}(\epsilon) + O( K^{d^\prime} e^{-c_2(\epsilon) d^\prime})
\end{equation}
For example one could proceed by expanding the $\ln$ in powers of $\vert\tanh l_i\vert$ and estimate the series term by term.
Now since $O(-)$ is uniformly bounded with respect to $n,d$ \eqref{88} implies for $d^\prime$ fixed
\begin{equation}
\lim_{n\to +\infty}\liminf_{d\to \infty}g_{n,d}(\epsilon) = \lim_{n\to +\infty}g_{n,d^\prime}(\epsilon) + O( K^{d^\prime} e^{-c_2(\epsilon) d^\prime})
\end{equation} 
Now we take the limit $d^\prime\to +\infty$,
\begin{equation}
\lim_{n\to +\infty}\liminf_{d\to \infty}g_{n,d}(\epsilon) = \lim_{d^\prime\to +\infty}\lim_{n\to +\infty}g_{n,d^\prime}(\epsilon)
\end{equation}
A similar result with $\limsup$ replacing $\liminf$ is derived in the same way. 
\end{proof}

\section{\sc Conclusion}

In this paper we have shown that cluster expansion techniques of statistical mechanics are a valuable tool for the theory of error correcting codes on graphs.
We have not investigated the regimes of high noise for LDPC codes and low noise for LDGM codes. In the case of LDPC codes and high noise we are able to prove decay of correlations for ensembles that contain a sufficient fraction of degree one variable nodes. Indeed one can eliminate the degree one nodes and convert the problem to a new graphical model containing a mixture of hard parity check constraints and soft LDGM type weigths. If the density of soft weights is high enough the analysis of the present paper can be extended (see \cite{KuMa07TC} for a summary). Combining theses ideas with duality one may also treat special ensembles of LDGM codes for low noise. This approach however is not entirely satisfactory and it is not clear how to directly go about with cluster expansions in these regimes. 

We hope that the ideas and techniques investigated in the present work could
have other applications in coding theory and more broadly random graphical
models. Let us mention that various forms of correlation decay have been
investigated recently for the random $K$-SAT problem at low constraint density,
by different methods \cite{MoShah}. This has allowed the authors to prove that
the replica symmetric solution is exact at low constraint density.  In \cite{Chertkov}
the authors derive a new type of expansion called ``loop expansion'' in an
attempt to compute corrections to BP equations. The link to traditioanl cluster
expansions is unclear to us, and also it would  be interesting to develop
rigorous methods to control the loop expansions.  Finaly, we would also like to
point out the work \cite{Procacci} where a new derivation of the Gilbert-Varshamov
bound is presented using the Mayer expansion for a hard-sphere systems.

\section*{Acknowledgment} 
The work of Shrinivas Kudekar has been supported by a grant of the Swiss National Foundation 200020-113412.

\appendix

\section{\sc Cluster Expansions}\label{Appendix-A}
In this appendix we explain the derivation of the two cluster expansions that we use. In the statistical mechanics literature these have been derived for spin systems with pair interactions on regular graphs. It turns out that they can be adapted to our setting. We try to give a self-contained by still reasonably short derivation here.

\subsection{Cluster expansion for LDGM codes}
Here we adapt the cluster expansion of Dreifus-Klein-Perez in \cite{Klein}. In the process we prove lemma \ref{lem:corrbound} stated in section \ref{section-3}. It will be very convenient to use the following compact notation
\begin{equation}
\prod_{a\in X} u_a = u_X,\qquad \text{for any set}\,\,\, X\subset \{1,...,m\}
\end{equation}
In particular the code-bits $x_i = \prod_{a\in \partial i} u_a$ become $u_{\partial i}$, $i=1,...,n$ and the correlation of lemma \ref{lem:corrbound} becomes 
\begin{equation}
\langle u_A u_B\rangle - \langle u_A\rangle \langle u_B\rangle
\end{equation}
It is first necessary to rewrite the Gibbs measure \eqref{gibbs-ldgm} in a form such that the exponent is positive
\begin{equation}
\frac{1}{Z}\prod_{i=1}^m e^{l_ix_i} = \frac{1}{Z^\prime}\prod_{i=1}^m e^{l_i u_{\partial i} + \vert l_i\vert}
\end{equation}
where $Z^\prime$ is the appropriately modified partition function.
We introduce the replicated measure, which is the product of two copies,
\begin{equation}
\frac{1}{Z^{\prime 2}}\prod_{i=1}^m e^{l_i (u_{\partial i}^{(1)}+u^{(2)}_{\partial i}) + 2\vert l_i\vert} 
\end{equation}
Thus we now have two replicas of the information-bits $u^{(1)}_1,...,u^{(1)}_m$ 
and $u^{(2)}_1,...,u^{(2)}_m$. The Gibbs bracket for the replicated measure is denoted by $\langle - \rangle_{12}$. It is easy to see that 
\begin{align}
\langle u_A u_B\rangle-\langle u_A\rangle\langle u_B\rangle = \frac{1}{2}\langle(u^{(1)}_A-u^{(2)}_A)(u^{(1)}_B-u^{(2)}_B )\rangle_{12}
\end{align}
Recall that $\bad=\{i\mid \vert l_i\vert >H\}$ for some fixed number $H$, and set 
\begin{equation}
e^{l_i(u_{\partial i}^{(1)}+u_{\partial i}^{(2)})+2|l_i|} - 1=K_i
\end{equation}
It will be important to keep in mind later that $K_i\geq 0$.
We have 
\begin{align}
\frac{1}{2}\langle(u^{(1)}_A-u^{(2)}_A)&(u^{(1)}_B-u^{(2)}_B )\rangle_{12} 
\nonumber \\ 
&=\frac{1}{2Z'^2}\sum_{u^{(1)},u^{(2)}}f_Af_B\prod_{i\in \bad}e^{l_i(u_{\partial i}^{(1)}+u_{\partial i}^{(2)})+2|l_i|}\prod_{i\in \bad^c} (1+K_i) 
\nonumber \\ 
&=\frac{1}{2Z'^2}\sum_{u^{(1)},u^{(2)}}f_Af_B\prod_{i\in \bad}e^{l_i(u_{\partial i}^{(1)}+u_{\partial i}^{(2)})+2|l_i|}\sum_{G\subseteq \bad^c}\prod_{i\in G} K_i
\nonumber \\
& = \frac{1}{2Z'^2}\sum_{G\subseteq \bad^c}\sum_{u^{(1)},u^{(2)}}f_Af_B\prod_{i\in \bad}e^{l_i(u_{\partial i}^{(1)}+u_{\partial i}^{(2)})+2|l_i|}\prod_{i\in G} K_i
\label{the-sum-over-G}
\end{align}
where $f_X = u^{(1)}_X-u^{(2)}_X$, $X=A,B$.

Take a term with given $G\subseteq\bad^c$ in the last sum. We say that "$G$ connects $A$ and $B$" if and only if there exist a self-avoiding walk\footnote{See section \ref{section-3} for the definition of these walks.} $w_{ab}$ with initial variable node $a\in A$, final variable node  $b\in B$ and such that all check nodes of $w_{ab}$ are in $G\cup\bad$\footnote{Note that it is really $G\cup\bad$ that connects $A$ and $B$. Since $\bad$ is fixed our definition is valid}. The crucial point is that: if a set $G$ does not connect $A$ and $B$, then it gives a vanishing contribution to the sum. We defer the proof of this fact to the end of this section. For the moment let us show that it implies the bound in lemma
\ref{lem:corrbound}. The positivity of $K_i$ implies 
\begin{align}
\vert\langle u_A&u_B\rangle  - \langle u_A\rangle\langle u_B\rangle\vert
\cr &
\le \frac{2}{Z'^2}\sum_{\substack{G\subseteq\bad^c\\ G \,\,\text{connects A and B} }}\sum_{u^{(1)},u^{(2)}}\prod_{i\in \bad}e^{l_i(u_{\partial i}^{(1)}+u_{\partial i}^{(2)})+2|l_i|}\prod_{i\in G} K_i 
\cr &
\leq 
\frac{2}{Z'^2}
\sum_{w\in
W_{AB}} \sum_{G'\subseteq
\bad^c\setminus w}\sum_{u^{(1)},u^{(2)}}\prod_{i\in
\bad}e^{l_i(u_{\partial i}^{(1)}+u_{\partial i}^{(2)})+2|l_i|}\prod_{i\in w\setminus\bad} \mathfrak{h}_i\prod_{i\in G'} K_i
\end{align}
In the second inequality we used $K_i\leq e^{4|l_i|} - 1\equiv \mathfrak{h}_i$. 
Now resumming over $G^\prime\subseteq\bad^c\setminus w$ we obtain
\begin{align}\label{eq:expansion4}
\vert\langle u_A&u_B\rangle-\langle u_A\rangle\langle u_B\rangle\vert 
\cr &
\le 
\frac{2}{Z'^2}\sum_{w\in
W_{AB}}\prod_{i\in w\setminus\bad} \mathfrak{h}_i\sum_{u^{(1)},u^{(2)}}\prod_{i\in \bad}e^{l_i(u_{\partial i}^{(1)}+u_{\partial i}^{(2)})+2|h_i|}\prod_{i\in\bad^c\setminus w } (1+K_i) 
\nonumber \\ 
&
\le 
\frac{2}{Z'^2}\sum_{w\in
W_{AB}}\prod_{i\in w\setminus\bad} \mathfrak{h}_i\sum_{u^{(1)},u^{(2)}}\prod_{i\in \bad}e^{l_i(u_{\partial i}^{(1)}+u_{\partial i}^{(2)})+2|h_i|}\prod_{i\in w\setminus\bad}(1+K_i) \prod_{i\in\bad^c\setminus w } (1+K_i) \nonumber \\ 
&
= 
2\sum_{w\in
W_{AB}}\prod_{i\in w\setminus\bad} \mathfrak{h}_i  
\end{align}
The second inequality follows by inserting extra terms $1+K_i\geq 1$ for $i\in w\setminus\bad$, and the second by reconstituting $Z^{\prime 2}$ in the numerator. Now, the last line is equal to 
\begin{equation}
2\sum_{w\in
W_{AB}}\prod_{i\in w} \rho_i , \,\,\,\,\, \rho_i=1, i\in \bad\,\,\,\, {\rm and}\,\,\,\,  \rho_i=\mathfrak{h}_i, i\notin \bad
\end{equation}
Hence the bound \eqref{KP-bound}.

It remains to explain why, if $G$ does not connect $A$ and $B$, the $G$-term does not contribute to \eqref{the-sum-over-G}. Let $\partial G\cup\partial\bad$ be the set of variable nodes connected to the check nodes $G\cup\bad$. We define a partition $\partial G\cup\partial\bad = V_A\cup V_C\cup V_B$ into three sets of variable nodes. $V_A$ is the set of all variable nodes $v$ such that there exist a self-avoiding walk $w_{av}$ connecting some $a\in A$ to $v$, and such that all ckeck nodes of  $w_{av}$ are in $G\cup\bad$. $V_B$ is similarly defined with $B$ and $b\in B$ instead of $A$. Finally $V_C= (\partial G\cup\partial\bad) \setminus
(V_A\cup V_B)$. By construction $V_C\cap V_A=V_C\cap V_B=\emptyset$. The point is that if $G$ does not connect $A$ and $B$, then $V_A\cap V_B=\emptyset$. Indeed, otherwise there would be a $u\in V_A\cap V_B$ with a walk $w_{au}$ and a walk $w_{ub}$ both with all check nodes in $G\cup \bad$, but this would mean that $G$ connects $A$ and $B$ through the walk $w_{au}\cup w_{ub}$. 
We also define three sets of check nodes 
$C_{A}= (G\cup\bad)\cap\partial V_{A}$, $C_{B}= (G\cup\bad)\cap\partial V_{B}$ and $C_C= (G\cup\bad)\setminus (G_A\cup G_B)$. Again the three sets are disjoint when $G$ does not connect $A$ and $B$: indeed if there exists $c\in C_A\cap C_B$ then $c$ belongs to both $V_A$ and $V_B$ which we just argued is impossible. This situation is depicted on figure (3).
\begin{figure}\label{figure-walks}
\begin{center}
\includegraphics[width=0.8\textwidth, height=0.7\textwidth]{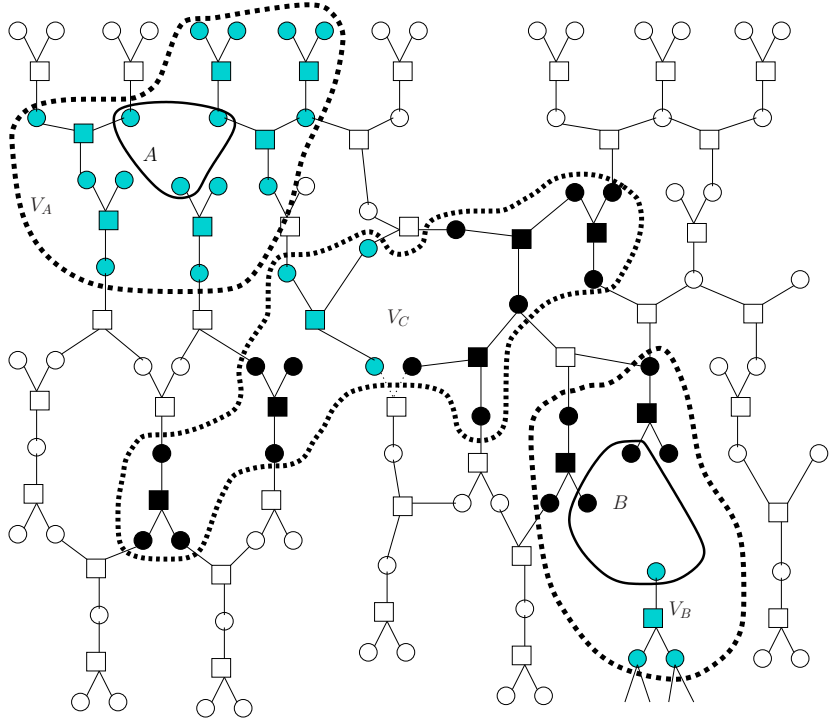}
\caption{On the LDGM graph, $\bad$ is depicted by the dark squares. The set $G\subseteq \bad^c$ is depicted by the light squares. $A$ and $B$ contain both three nodes; there does not 
exist a self-avoiding walk that connects these two sets with all its check nodes in $G$. The sets of variable nodes $V_A$, $V_B$ and $V_C$ are disjoint as well as the sets of check nodes $G_A$, $G_B$ and $G_C$: these sets are enclosed in the dotted areas.} 
\end{center}
\end{figure}

Now we examine a term of \eqref{the-sum-over-G} for a $G$ that does not connect $A$ and $B$. Expanding the product $f_Af_B$, using linearity of the bracket and symmetry under exchange of replicas
$(1)\leftrightarrow (2)$, it is equal to the difference $I -II$ where
\begin{equation}
I=\frac{1}{2Z'^2}\sum_{u^{(1)},u^{(2)}}u^{(1)}_Au^{(1)}_B\prod_{i\in \bad}e^{l_i(u_{\partial i}^{(1)}+u_{\partial i}^{(2)})+2|l_i|}\prod_{i\in G} K_i
\end{equation}
\begin{equation}
II=\frac{1}{2Z'^2}\sum_{u^{(1)},u^{(2)}}u^{(1)}_Au^{(2)}_B\prod_{i\in \bad}e^{l_i(u_{\partial i}^{(1)}+u_{\partial i}^{(2)})+2|l_i|}\prod_{i\in G} K_i
\end{equation}
Because of the disjointness of the sets $V_{A,B,C}$ and $C_{A,B,C}$ (the areas enclosed in dotted lines, see figure (3)) one can, in $I$ and $II$, factor the sums $\sum_{u^{(1)},u^{(2)}}$ in a product of three terms (in fact there is a fourth trivial term which is a power of $2$ coming from the bits outside the dotted areas). Then by symmetry $1\leftrightarrow 2$ one recognizes that $I=II$. Thus $I-II$ and this proves that $G$ does not contribute to \eqref{the-sum-over-G} when it does not connect $A$ and $B$.

\subsection{Cluster expansion for LDPC codes}

Here we adapt the Berretti cluster expansion to our setting. For more details we refer to \cite{Berretti}, \cite{Frohlich}. 
Consider the {\it replicated} partition function 
\begin{align}
Z_{\perp}^2 & = \sum_{u^{(1)}, u^{(2)}\in\{-1,+1\}^m} \prod_{k=1}^n (1 + \tau^{(1)}_k e^{-2 l_k})(1 + \tau^{(2)}_k e^{-2 l_k})
\end{align}
here $u^{(1)}=u^{(1)}_1, \dots, u^{(1)}_m;  u^{(2)}=u^{(2)}_1, \dots, u^{(2)}_m$
are two replicas of the information bits and $\tau^{(1)}_k = \prod_{a\in k}
u^{(1)}_a, \tau^{(2)}_k = \prod_{a\in k} u^{(2)}_a$. 
We have
\begin{align}
\langle \tau_i\tau_j \rangle_{\perp} - \langle \tau_i \rangle_{\perp}\langle \tau_j \rangle_{\perp} & = 
\frac12\langle (\tau^{(1)}_i - \tau^{(2)}_i)(\tau^{(1)}_j - \tau^{(2)}_j)\rangle_{\perp,12}
\end{align}
where $\langle \cdot \rangle_{\perp,12}$ corresponds to the replicated system. 
We denote $f_{i}= \tau^{(1)}_i - \tau^{(2)}_i$, $f_j=\tau^{(1)}_j - \tau^{(2)}_j$.
Then we have
\begin{align}
\langle f_{i}f_j\rangle_{\perp,12} = \frac{1}{Z_{\perp}^2}\sum_{u^{(1)}, u^{(2)}}&
f_{i}f_j \prod_{k}(1 + E_k)
\end{align}
where $E_i$ is defined in \eqref{eq:Ea}.
Expanding the product we get,
\begin{align}\label{eq:fijberretti}
\langle f_{i}f_j\rangle_{\perp,12} & = \frac{1}{Z_{\perp}^2} \sum_{u^{(1)}, u^{(2)}} f_{i}f_j \sum_{V\subset \spins} \prod_{k\in V} E_k \nonumber \\
& = \frac{1}{Z_{\perp}^2} \sum_{V\subset \spins}\sum_{u^{(1)}, u^{(2)}} f_{i} f_j\prod_{k\in V} E_k
\end{align}
where $\spins$ denotes the set of all variable nodes of the original Tanner graph for the LDPC code and $V$ is any subset of distinct variable nodes. 
Suppose  
$V\subset \spins$ is such that {\em one cannot create a walk} (i.e. on the original Tannger graph of the LDPC code, a set of
alternating variable and check nodes) connecting any check node in
 $\partial i$, to any check node in 
$\partial j$, and which has all its variable nodes contained 
entirely in $V$. Then we can partition $V$ into three mutually disjoint sets of variable nodes,
$V_1, V_2, V_3$ such that $V_1 \ni i$, $V_2 \ni j$ and $V_3=V\setminus (V_1\cup
V_2)$. Note also that $\partial V_1, \partial V_2, \partial V_3$ are mutually
disjoint otherwise we can create a walk between $\partial i$ and $\partial j$. Thus we can write  
\begin{align}
\sum_{u^{(1)}, u^{(2)}}f_if_j\prod_{k\in V} E_k  = \sum_{\substack{u^{(1)}, u^{(2)} \\ u^{(1)}_a, u^{(2)}_a
\in \partial V_1}}
f_i\prod_{k\in V_1} E_k 
&
\sum_{\substack{u^{(1)}, u^{(2)} \\ u^{(1)}_a, u^{(2)}_a
\in \partial V_2}}f_j \prod_{k\in V_2} E_k  
\nonumber \\
&
\times\sum_{\substack{u^{(1)}, u^{(2)} \\ u^{(1)}_a, u^{(2)}_a
\in \partial V_3}}\prod_{k\in V_3} E_k
\label{vanish}
\end{align} 
This implies that \eqref{vanish} vanishes. 
This is seen by using the antisymmetry of
$f_i$ (or $f_j$) and the symmetry of $E_k$, 
 under the exchange $(1)\leftrightarrow (2)$.
Thus only those $V$  which contain a walk with all its variable nodes in $V$ and
which intersects both $\partial i$ and $\partial j$ contributes to the sum in
\eqref{eq:fijberretti}.  

For any given $V$ (contributing to the sum) we construct the set of variable nodes $\Gamma_V$ as follows. 
$\Gamma_V$ is the union of all  maximal connected clusters of 
distinct variable nodes in $V$, such that each of those connected clusters intersects 
$\partial i\cup \partial j$.  Let $\Gamma_V^c = V \setminus \Gamma_V$. Clearly,
there exists such a set because we know that the walk which connects $\partial
i$ and $\partial j$ is a subset of $\Gamma_V$. Let
$\hat{X}_V = \partial \Gamma_V \cup \partial i \cup \partial j$ be a set of
check nodes. It is not difficult to see that $\hat{X}_V$ satisfies all the
requirements of the set $\hat{X}$ in the sum \eqref{eq:berrettiexp}. Indeed,
consider $X_V = \Gamma_V \cup i \cup j$. By construction $\partial X_V =
\hat{X}_V$; any two variable nodes in $X_V$ are connected by a walk with all its
variable nodes in $X_V$; $\hat{X}_V$ contains both $\partial i$ and $\partial
j$. Also note that $\Gamma_V$ is compatible with $\hat{X}_V$ as is required in
the sum \eqref{eq:Kij}. Indeed, by construction $\partial \Gamma_V\cup \partial
i \cup \partial j = \hat{X}_V$; $\partial \Gamma_V\cap \partial i \neq \phi$ and 
$\partial \Gamma_V\cap \partial j \neq \phi$; there exists a walk between
$\partial i$ and $\partial j$ with all its variable nodes in $\Gamma_V$.

With this we can write
\begin{align}
\langle f_{i}f_j\rangle_{\perp,12}
& = \frac{1}{Z_{\perp}^2} \sum_{V\subset \spins}\Big\{\sum_{\substack{u_a^{(1)},
u^{(2)}_a
\\ a \in \partial \Gamma_V\cup \partial i\cup\partial j}} f_{i}f_j \prod_{k\in \Gamma_V} E_k \Big\}
\Big\{ \sum_{\substack{u_a^{(1)}, u_a^{(2)} \\ \text{remaining}\; a}}\prod_{k\in
\Gamma_V^c} E_k\Big\} \nonumber \\
& = \frac{1}{Z_{\perp}^2} \sum_{\hat{X}}\sum_{\substack{V\subset \spins\; : \\ \hat{X}_V = \hat{X}}}
\Big\{\sum_{\substack{u_a^{(1)},u^{(2)}_a
\\ a \in \hat{X}}} f_{i}f_j \prod_{k\in \Gamma_V} E_k \Big\}
\Big\{ \sum_{\substack{u_a^{(1)}, u_a^{(2)} \\ \text{remaining}\; a}}\prod_{k\in
\Gamma_V^c} E_k\Big\} 
\end{align}
Now we resum over the sets $V$ such that $\hat X_V=\hat X$. These consist of $\Gamma$ compatible with $\hat X$ and the rest $\mathcal G$ which does not intersect $\hat X$. So
\begin{align}
\langle f_{i}f_j\rangle_{\perp,12} 
& = \frac{1}{Z_{\perp}^2} \sum_{\hat{X}}\Bigg\{\sum_{\substack{u^{(1)}_a,
u^{(2)}_a \\ a
\in \hat{X} }}\sum_{\substack{\Gamma \text{compatible} \\
\text{with} \hat{X}}} 
f_{i}f_j \prod_{k\in \Gamma} E_k \Bigg\}\Bigg\{
\sum_{\substack{u^{(1)}_a, u^{(2)}_a \\ a\in \hat{X}^c}}
\sum_{\substack{\mathcal{G} \subseteq \spins \\
\partial \mathcal{G}\cap \hat{X}=\phi}}\prod_{k\in \mathcal{G}} E_k\Bigg\} \nonumber \\
& = \frac{1}{Z_{\perp}^2} \sum_{\hat{X}}\Bigg\{\sum_{\substack{u^{(1)}_a,
u^{(2)}_a \\ a
\in \hat{X} }}\sum_{\substack{\Gamma \text{compatible} \\
\text{with} \hat{X}}} 
f_{i} f_j\prod_{k\in \Gamma} E_k \Bigg\}\Bigg\{
\sum_{\substack{u^{(1)}_a, u^{(2)}_a \\ a\in \hat{X}^c}}
\prod_{\substack{\text{all}\; k\;\text{s.t.} \\ \partial k \cap \hat{X} = \phi }}(1+E_k) \Bigg\}
\end{align} 
The last bracket is equal to \eqref{eq:reducedpartitionfunction} and we recognize Berretti's expansion. Figure (4) shows a sample set $V$ and
$\Gamma_V$ which give a non-vanishing contribution. 
\begin{figure}\label{fig:berrettifigure2}
\begin{center}
\includegraphics[width=0.85\textwidth,height=0.65\textwidth]{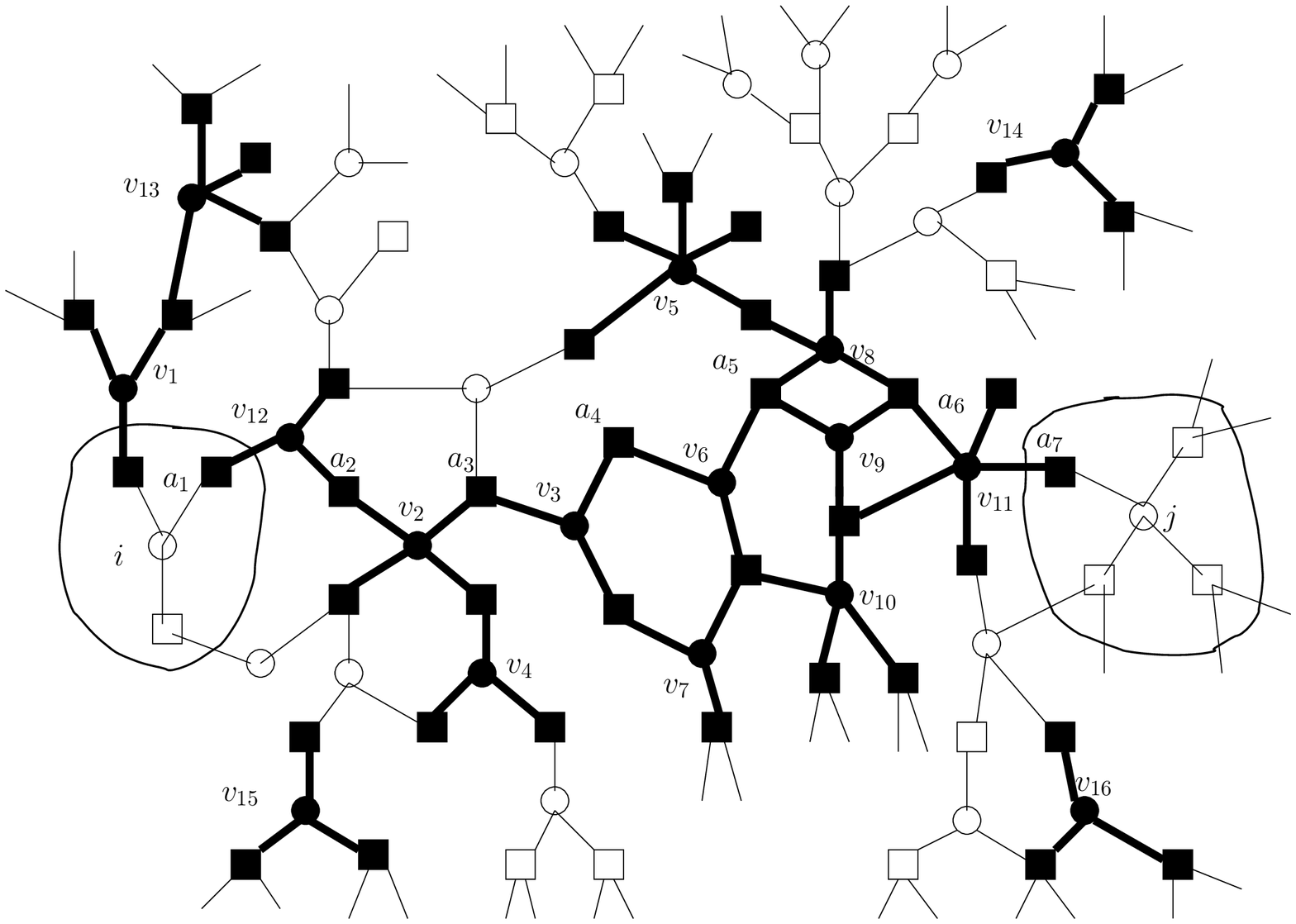}
\caption{The {\em dark} variable nodes form the set $V=\{v_1, \dots, v_{16}\}$. The walk $a_1v_{12}a_2v_2a_3v_3a_4v_6a_5v_9a_6v_{11}a_7$ connects $\partial i$ to $\partial j$ and hence this $V$ has a non-vanishing contribution. 
$\Gamma_V=\{v_{1}, \dots, v_{12}\}$, is union of the two maximal connected clusters
$\{v_1, v_{13}\}$ and $\{v_{12}, v_2, v_3, v_4, v_5, v_6, v_7, v_8, v_9, v_{10},
v_{11}\}$ which has  
intersection with $\partial i \cup \partial j$. $\Gamma_V^c = \{v_{14},
v_{15}, v_{16}\}$.}
\end{center}
\end{figure}

\newpage

\end{document}